\documentclass{CSML}

\pdfoutput=1

% LMCS Layouting Macros
\usepackage{lastpage}
%\newcommand{\dOi}{\color{red}1?(?:X)201?}
%\lmcsheading{}{1--\pageref{LastPage}}{}{}%
%{\color{red}???.~??, 201?}{\color{red}???.~XX, 201?}{}

\def\dOi{13(4:20)2017}
\lmcsheading%
{\dOi}
{1--\pageref{LastPage}}
{}
{}
{Jun.~\phantom04, 2016}
{Nov.~29, 2017}
{}

%\usepackage{showframe}
%\usepackage{pn}
%%%%%%%%%%%%%%%%%%%%%%%%%%%%%%%%%%
%                                %
% package loading and settings   %
%                                %
%%%%%%%%%%%%%%%%%%%%%%%%%%%%%%%%%%

%%%%%%%%%%%%%%%%%%%%%%%%%%%%%%%%%%%%%%%%%%%%%%%%%%%%%% packages

% encoding and language
\usepackage[T1]{fontenc}		% output font encoding
\usepackage[utf8]{inputenc}		% input encoding
\usepackage[english]{babel}				% language-specific typesetting

% extending and fixing tex
\usepackage{tabularx}			% advanced arrays
\usepackage{etex}				% extended memory
\usepackage{etoolbox}
\usepackage{xargs,ifthen}	% programming
\usepackage{fixltx2e} 			% fixes stuff in LaTeX
\usepackage{fnpct}				% better footnotesv

% writing algorithms
\usepackage{algorithmic}

% Drawings
\usepackage{tikz}
	\usetikzlibrary{calc,arrows,matrix,positioning,fit,decorations.markings}

% mise en page
\usepackage{multicol}		% colonnes multiples
\usepackage{wrapfig}		% texte autour d'une figure
\usepackage{graphicx}		% rotation, rescaling etc.
%\usepackage{enumitem}		% paramétrage itemize
%	\setitemize{itemsep=0pt}
%	\setenumerate{itemsep=0pt}
%	\newlength{\savedparindent}
%	\setlength{\savedparindent}{\parindent}
%	\setitemize{leftmargin=\savedparindent}
%	\setenumerate{leftmargin=\savedparindent}
%	\setitemize{topsep=\parskip}
%\usepackage[nameinlink]{cleveref}		% automatic references
\usepackage{xcolor}			% couleurs
\usepackage{aliascnt}

% typography
\usepackage{blindtext}
\usepackage{xspace}			% gestion des espaces après les macro de texte
\usepackage{microtype}		% for typography geeks
\usepackage{csquotes}		% guillemets
\usepackage{relsize}		% relative sizes in math 
\usepackage{ellipsis}
%\usepackage{bm}
%\usepackage{MnSymbol}

% symboles math, theoremes, police
%\usepackage[mathb]{mathabx}
\usepackage{amssymb}			% AMS stuff
\usepackage{mathtools}			% other math commands (loads AMSmath)
\usepackage{amsthm}
\usepackage{thmtools}

% logique
\usepackage{cmll}			% EB package for LL
\usepackage{stmaryrd}		% for the parr and with
\usepackage{ebproof}		% proof typesetting
		% font for rules label
	\newcommand{\Theproof}[2]
		{\Hypo{\langle#1\rangle}\Infer[rule style=no rule]{1}{#2}} % some proof #1 with conclusion #2
	\ebproofset{label separation=1pt} % label distance from bar
	\ebproofset{right label template={\footnotesize\inserttext}} % label sizes

%%%%%%%%%%%%%%%%%%%%%%%%%%%%%%%%%%%%%%%%%%%%%%%%%%%%%%%%%%%%%% settings

\usepackage					% hyperstuff
%		[colorlinks,%linkcolor=darkred,citecolor=darkgreen, urlcolor=darkblue,
%		bookmarksnumbered=true,bookmarksopen=true,bookmarksdepth=2,
%		pdfusetitle=true]
                [hidelinks]
	{hyperref}

\AtBeginDocument{
	% symbole pour itemize
	
	% coupures de lignes, espacement et hyphénation
	\hyphenpenalty=1000
	\tolerance=400
	\setlength{\parskip}{2pt}
}

% tweaking space for displayed math
% defaults

%\medmuskip=4mu plus 2mu minus 1mu

%\abovedisplayskip=12pt plus 3pt minus 9pt
%\abovedisplayshortskip=0pt plus 3pt
%\belowdisplayskip=12pt plus 3pt minus 9pt
%\belowdisplayshortskip=7pt plus 3pt minus 4pt

\abovedisplayskip=8pt plus 2pt minus 6pt
\abovedisplayshortskip=0pt plus 2pt
\belowdisplayskip=8pt plus 2pt minus 6pt
\belowdisplayshortskip=4pt plus 2pt minus 2pt

% adding a slight space before/after inline math.
% FINALLY FOUND IT!!
% defaults = 0pt

\mathsurround=1pt

%%Need to undef/redef to get automatic references (\cref) to work

%\undef{\theorem}
%\undef{\cor}
%\undef{\lem}
%\undef{\claim}
%\undef{\axiom}
%\undef{\conj}
%\undef{\fact}
%\undef{\hypo}
%\undef{\assum}
%\undef{\prop}
%\undef{\crit}
%\undef{\defn}
%\undef{\exmp}
%\undef{\rem}
%\undef{\prob}
%\undef{\prin}
%\undef{\alg}
%\undef{\proof}

%\usepackage{autoref}		% automatic references

\newtheorem{theorem}{Theorem}[section]

\newtheorem{lemma}[theorem]{Lemma}

\newtheorem*{definition*}{Definition}

%\newtheorem{prob}[theorem]{Problem}
%\newtheorem{prin}[theorem]{Principle}
%\newtheorem{alg}{Algoritheorem}

% this one is not defined by default
%\newtheorem{nota}[theorem]{Notation}

%\newenvironment{notation}{\vspace{-\lastskip}\par \addvspace{.6pc
%    plus .2pc minus
%    .1pc}\begin{nota}\rm}{\end{nota}\par\addvspace{.6pc plus .2pc
%    minus .1pc}}

% decision trees
\newcommand{\bdt}{{\sf \textsf{BDT}}\xspace}
\newcommand{\BDT}{{\sf \textsf{BDT}}\xspace}

\newcommand{\zero}{\mathbf 0}
\newcommand{\one}{\mathbf 1}
\newcommand{\itef}[3]{#1\rhd #2 \talloblong #3}

\newcommand{\booleq}{\sim}

% mall pn
\newcommand{\bslicing}[1]{\slicef B_{#1}}
\newcommand{\slicing}[1]{\slicef S_{#1}}
\newcommand{\transl}{\mathbf t}
\newcommand{\sequ}{\mathbf s}
\newcommand{\lamlp}{$\lambda^\linear_\oplus$\xspace}
\newcommand{\malleq}{\sim}
\newcommand{\sliceeq}{\sim}
\newcommand{\mallm}{{\bf\textsf{MALL}}\textsuperscript{\!\bf-}\xspace}
\newcommand{\mllm}{{\bf\textsf{MLL}}\textsuperscript{\!\bf-}\xspace}
\newcommand{\mll}{{\bf\textsf{MLL}}\xspace}
\newcommand{\labwith}[1]{\with_{#1}}
\newcommand{\dplus}[1]{\oplus_{#1}^{\!\star}}
\newcommand{\labplus}[1]{\oplus_{#1}\!}
\newcommand{\dlinear}{\linear^{\!\star}}
\newcommand{\plus}{\oplus}
\newcommand{\lplus}{\oplus{\tt\scriptsize l}}
\newcommand{\rplus}{\oplus{\tt\scriptsize r}}
\newcommand{\rlabwith}[1]{\rwith_{#1}}

\newcommand{\mallmeqprob}{\problem{\mallm{}eq}\xspace}
\newcommand{\bdteqprob}{\problem{\bdt{}eq}\xspace}

\newcommand{\lamlpeqprob}{\problem{\lamlp{}eq}\xspace}
\newcommand{\repbdt}[1]{\pi_{#1}}
\newcommand{\pplus}[1]{\pi^{#1}_\oplus}
\newcommand{\pid}[1]{\pi^{#1}_{\sc Id}}
\newcommand{\pleft}{\pi_{\tt 0}}
\newcommand{\pright}{\pi_{\tt 1}}
\newcommand{\ex}[2]{\texttt{ex}_{#1,#2}}
\newcommand{\exs}{\texttt{ex}}

% special atoms, contexts
\newcommand{\cont}[2]{\Gamma_{#1,#2}}
\newcommand{\contb}[1]{\Delta_{#1}}
\newcommand{\contf}[1]{\Lambda_{#1}}
\newcommand{\form}[1]{F_{#1}}
\newcommand{\nalpha}{\lneg\alpha}
\newcommand{\nbeta}{\lneg\beta}

\newcommand{\ngamma}{\lneg\gamma}

\newcommand{\lbeta}{\beta^{\tt l}}
\newcommand{\rbeta}{\beta^{\tt r}}
\newcommand{\dalpha}{\alpha^{\linear}}
\newcommand{\lalpha}{\alpha^{\tt l}}
\newcommand{\ralpha}{\alpha^{\tt r}}

% text macros
\newcommand{\modulo}{\emph{modulo}\xspace}
\newcommand{\cut}{{\tt cut}\xspace}

\newcommand{\Lcalc}{{\hbox{$\lambda\!$-cal}culus}\xspace}

\newcommand{\Lterms}{$\lambda\!$-terms\xspace}

% locutions
\newcommand{\locit}[1]{\emph{#1}}
\newcommand{\eg}{\locit{e.g.}~}
\newcommand{\ie}{\locit{i.e.}~}
\newcommand{\via}{\locit{via}~}

\renewcommand{\iff}{\locit{iff}~}
\newcommand{\wloss}{\locit{w.l.o.g.}~}

\newcommand{\etc}{\locit{etc.}\xspace}

% fonts
\newcommand{\compclass}[1]{{\sf\textbf{#1}}}		% complexity class
\newcommand{\problem}[1]{{\normalfont\textsf{\textbf{#1}}}}		% problem: GraphCylce etc.
		% functor
		% category
\newcommand{\slicef}{\mathcal}		% category

% standard sets:
		% relative integers
		% integers
		% reals
		% complex

% general math
							% definition
\newcommand{\extlist}[2]{#1,\,\dots\,,#2}					% list/sequence in extension
						% list/sequence in extension
\newcommandx{\extset}[3][1=]{#1\{#2,\,\dots\,,#3#1\}}		% set in extension
		% set in extension
\newcommandx{\set}[3][1=]{#1\{\:#2 \ #1|\ #3\:#1\}}		% set in comprehension
\newcommand{\Id}[1]{{\sf Id}_{#1}}										% identity
						% isomorphism arrow
\newcommand{\void}{\varnothing}									% emptyset
					% unit algebra
				% symmetry
				% symmetry
				% invers

% logique, symboles LL
\newcommand{\boolf}{\mathbf B}								% boolean
								% unit

\newcommand{\lneg}[1]{{#1}^\star}				% linear negation
				% linear negation
			% execution
					% iteration
\newcommand{\rwith}{\binampersand}
\newcommand{\rparr}{\bindnasrepma}
\renewcommand{\with}{\binampersand}
\renewcommand{\parr}{\bindnasrepma}

\newcommand{\linear}{\multimap}
						% logical rule
				% daimon

% Complexity
\newcommand{\Logspace}{\compclass{Logspace}\xspace}						% Logspace
						% Logspace
							% N-Logspace
		% co-N-Logspace
\newcommand{\Ptime}{\compclass{Ptime}\xspace}							% Ptime
\newcommand{\Pspace}{\compclass{Pspace}\xspace}
							% Ptime

							% Nick's Class, NC
\newcommand{\coNP}{\compclass{coNP}\xspace}

\newcommand{\aco}{\compclass{AC\textsubscript 0}\xspace}

\begin{document}

%%%%% TITLE

\title[Multiplicative-Additive Proof Equivalence is \Logspace-complete]
  {Multiplicative-Additive Proof Equivalence is \Logspace-complete, \via Binary Decision Trees\rsuper*}

\author[M.~Bagnol]{Marc Bagnol}
\address{Department of Mathematics and Statistics, University of Ottawa}
\email{bagnol@phare.normalesup.org}

\keywords{Linear logic, additive connectives, proofnets, proof equivalence, logarithmic space}
\subjclass{F.1.3 Complexity Measures and Classes, F.4.1 Mathematical Logic}
\titlecomment{{\lsuper*}Extended and updated journal version of the author's TLCA'15 article~\cite{bagnol15}, including
an \textit{erratum} to a wrong proof in it.}

\maketitle

%\acks
%This work was partly supported by the ANR-10-BLAN-0213 Logoi, the ANR-11- BS02-0010 Récré and the ANR-11-INSE-0007 REVER.

\begin{abstract}
Given a logic presented in a sequent calculus, a natural question is that of equivalence of proofs: to determine whether two given proofs are
equated by any denotational semantics, \ie any categorical interpretation of the logic compatible with its cut-elimination procedure. 
This notion can usually be captured syntactically by a set of rule permutations.

Very generally, proofnets can be defined as combinatorial objects which provide canonical representatives of equivalence classes
of proofs. In particular, the existence of proof nets for a logic provides a solution to the equivalence problem of this logic.
In certain fragments of linear logic, it is possible to give a notion of proofnet with good computational properties,
making it a suitable representation of proofs for studying the \cut-elimination procedure, among other things.

It has recently been proved
that there cannot be such a notion of proofnets for the multiplicative (with units) fragment of linear logic, due to the
equivalence problem for this logic being \Pspace-complete.

We investigate the multiplicative-additive 
(without unit) fragment of linear logic and show it is closely related to binary decision trees: 
we build a representation of proofs based on binary decision trees, 
reducing proof equivalence to decision tree equivalence, and give a converse encoding of binary decision trees as proofs.
We get as our main result that the complexity of the proof equivalence problem of the  studied fragment
is \Logspace-complete.

% Moreover?

\end{abstract}

\section{Introduction}
Writing a proof in a formal system, or a program following the grammar of a programming language,
one often finds different equivalent ways of describing the same object.

This phenomenon can be described at a semantic level: recall that a denotational semantics of a logic is a categorical interpretation
that is invariant \modulo its \cut-elimination procedure.
We want to say that two \cut-free proofs are equivalent if they are interpreted as the same morphism in any denotational semantics.

\begin{defi}%[equivalence]
	We say that two \cut-free proofs in a logic $\problem L$ presented in sequent calculus are \emph{equivalent} if they are interpreted
	by the same morphism in any denotational semantics of $\problem L$.
\end{defi}

Consider now that the logic at hand is given as a sequent calculus. Then we can consider a more syntactic flavor of the same idea 
by looking at rewriting steps that yield equivalent proofs, \eg by be permuting two rules. For example, in the following 
proof\footnote{For now, we do not bother with the exchange rule for ease of presentation, but this will have to change when we get into more technical considerations.
See \autoref{rem_mistake}.}

\begin{prooftree*}
	\Theproof\pi{A,B\vdash D}
	\Theproof\mu{A,C\vdash D}
	\Infer{2}[$\dplus{}$]{A,B\oplus C\vdash D}
	\Infer{1}[$\linear$]{B\oplus C\vdash A\linear D}
\end{prooftree*}
the $\linear$ rule can be lifted above the $\dplus{}$ rule, yielding the equivalent
\begin{prooftree*}
	\Theproof\pi{A,B\vdash D}
	\Infer{1}[$\linear$]{B\vdash A\linear D}
	\Theproof\mu{A,C\vdash D}
	\Infer{1}[$\linear$]{C\vdash A\linear D}
	\Infer{2}[$\dplus{}$]{B\oplus C\vdash A\linear D}
\end{prooftree*}
In the Curry-Howard view on logic, proofs are seen as programs, the \cut-rule corresponds to composition and the 
\cut-elimination procedure to the evaluation mechanism. The existence of equivalent but formally different objects is an 
issue in that during \cut-elimination, we may have to switch between equivalent objects to be able to actually perform a
reduction step. Consider 
\begin{prooftree*}
	\Theproof\nu{\vdash B}
	\Infer1[$\lplus$]{\vdash B\oplus C}

	\Theproof\pi{A,B\vdash D}
	\Theproof\mu{A,C\vdash D}
	\Infer{2}[$\dplus{}$]{A,B\oplus C\vdash D}
	\Infer{1}[$\linear$]{B\oplus C\vdash A\linear D}
	
	\Infer{2}[\cut]{\vdash A\linear D}
\end{prooftree*}
In order to go on with \cut-elimination, one option\footnote{The standard solution would rather be to lift the
\cut above the $\linear$, but the end result would be the same.}
is to apply the permutation we just saw to the $\dplus{}$ and $\linear$ rules to get:
\begin{prooftree*}
	\Theproof\nu{\vdash B}
	\Infer1[$\lplus$]{\vdash B\oplus C}

	\Theproof\pi{A,B\vdash D}
	\Infer{1}[$\linear$]{B\vdash A\linear D}
	\Theproof\mu{A,C\vdash D}
	\Infer{1}[$\linear$]{C\vdash A\linear D}
	\Infer{2}[$\dplus{}$]{B\oplus C\vdash A\linear D}
	
	\Infer{2}[\cut]{\vdash A\linear D}
\end{prooftree*}
and be able to perform an actual $\oplus/\dplus{}$ elimination step.
Note how this situation mirrors the permutation we took as an example just above.
These steps where some reorganization of the proof but no actual elimination occurs
are called \emph{commutative conversions}. They make the study of the \cut-elimination procedure much more
intricate, as one has to work \modulo equivalence, and this equivalence cannot be oriented: 
at some point we need to have the \cut rule in scope and realize it commutes with itself, if we want the
procedure to be confluent (indeed \mallm \cut-elimination is only confluent \modulo permutations)
\begin{center}
%\scalebox{0.83}{
\begin{prooftree}
	\Theproof{\nu}{\vdash B}
	\Theproof{\mu}{\vdash A}
	\Theproof{\pi}{A,B\vdash C}
	\Infer{2}[\cut]{B\vdash C}
	\Infer{2}[\cut]{\vdash C}
\end{prooftree}
%}
~~$\malleq$~~
%\scalebox{0.83}{
\begin{prooftree}
	\Theproof{\mu}{\vdash A}
	\Theproof{\nu}{\vdash B}
	\Theproof{\pi}{A,B\vdash C}
	\Infer{2}[\cut]{A\vdash C}
	\Infer{2}[\cut]{\vdash C}
\end{prooftree}
%}
\end{center}
This is particularly problematic in view of fine-grained analysis of the \cut-elimination procedure,
\eg in terms of complexity.
In any case, we can define a decision problem following the discussion above, and later discuss its decidability,
complexity \etc

\begin{defi}[equivalence problem]
	Given a logic ${\problem L}$ presented in sequent calculus, we define the \emph{equivalence problem} of ${\problem L}$
	which we write $\problem{Leq}$, as the decision problem:
	\begin{center}
		\enquote{Given two ${\problem L}$ proofs $\pi$ and $\nu$, are they equivalent?}
	\end{center}
	We write $\pi\malleq\nu$ when two proofs are equivalent.
\end{defi}

\subsection{Proofnets and complexity of proof equivalence}

The theory of proofnets, introduced alongside linear logic \cite{girard87}, aims at tackling this issue by looking for
a canonical representation of proofs, usually graphical objects, with the idea that when one manipulates canonical
objects the commutative conversions automatically disappear.
More technically: we look for a data structure offering canonical
representatives of equivalence classes of proofs.
The absolute minimal requirement is a translation function $\transl(\cdot)$ from proofs to proofnets.
One usually requires a section $\sequ(\cdot)$, from proofnets to proofs (with $\transl\circ\sequ=\Id{}$) 
of $\transl(\cdot)$ which is usually called \emph{sequentialization},
and of course the ability to implement \cut-elimination natively on the proofnet side. All of these together
allows to use proofnets as proofs on their own accord, while having only $\transl(\cdot)$ tells us about
proof equivalence and nothing more.

In this article we will only look into the complexity of computing the $\transl(\cdot)$ function and will refer to it simply as 
the complexity of the proofnets at hand.
For the multiplicative without units fragment of linear logic \mllm, a satisfactory notion of proofnet 
can be defined: it indeed offers a representation of equivalence classes of proofs, while
the translation from proofs to proofnets can be computed in logarithmic 
space\footnote{Given a proof and looking at its conclusion sequent, one only needs to compute when two atoms are linked by an axiom link. Which can be 
done by going from the root of the proof to an eventual leaf (axiom rule), potentially stopping at a $\otimes$ rule
if the two atoms are sent to two different subproofs.}.

Contrastingly, the linear logic community has struggled to extend the notion of proofnets
to wider fragments: even the case of \mll (that is, \mllm plus the multiplicative units)
could not find a satisfactory answer. 
A recent result~\cite{hh14} sheds some light on this question.
Since proofnets, when they exist, are canonical representatives of equivalence classes of proofs, they offer
a way to solve the equivalence problem of the logic: simply translate the two proofs and check 
if the resulting proofnets are the same.
The authors show that $\problem{MLLeq}$ is actually a \Pspace-complete problem. Hence, there is no hope
for a satisfactory notion of low-complexity proofnet for this fragment

%\begin{figure}[b]
%	\input{mallm_rules}
%	\caption{\mallm rules (\cut omitted)\label{mallm_rules}}
%\end{figure}

In this article, we consider the same question, but in the case of \mallm: the multiplicative-additive
without units fragment of linear logic. Indeed, this fragment has so far also resisted
the attempts to build a notion of proofnet that at the same time characterizes proof
equivalence and has basic operations of tractable complexity: we have either canonical nets of
exponential size \cite{hvg05} or tractable nets that are not canonical \cite{girard96}.
D.~Hughes and W.~Heijltjes \cite{hh16} recently argued that it is unlikely that one can devise a notion of
proofnets for \mallm that is at the same time canonical and \Ptime.
% We will not be able to establish this
%as a theorem yet, but we hope our results help clarifying the matter.

One might suspect that we have completeness for some untractable complexity
class\footnote{The problem was first discovered to be decidable \cite{cp04}. Later on the first notion of canonical proofnets
for \mallm~\cite{hvg05} implied it can be solved in exponential time and finally the \coNP bound~\cite{mt03} on \cut-elimination for the fragment,
which subsumes the equivalence problem, was the best known to date.},
as is the case for \mll. An obvious candidate in that respect would be \coNP: as we will see, one of the
two approaches to proofnets for \mallm is related to Boolean formulas, which equivalence
problem is widely known to be \coNP-complete. Moreover, the \cut-elimination problem (given two proofs
with \cut{}s, do they have the same normal form) of \mallm is already known to be \coNP-complete \cite{mt03}.

It actually turns out that this is not the case as we will show that the
equivalence problem of \mallm is \Logspace-complete. 
However the connection between the problem of proofnets for this fragment and the notion of binary decision tree 
established in the course of the proof provides a beginning of clarification on the matter.

\subsection{Binary Decision Trees}

The problem of the representation of Boolean functions is of central importance in circuit
design and has a large range of practical applications. Over the years, binary decision
diagrams \cite{wegener00} became the most widely used data structure to treat this question.
We will actually use here the simpler cousins of binary decision diagrams which do not
allow sharing of subtrees: binary decision trees (\bdt).

Let us set up a bit of notation and vocabulary concerning \bdt.

\begin{defi}\label{def_bdt}
	A binary decision tree (\bdt) is a a binary tree with leaves labeled by $\zero$ or $\one$, and internal vertices labeled by Boolean variables.

	We will only be manipulating \emph{free} \bdt: no variable appears twice on a path from the root to a leave.
%	We write $\var\phi$ the set of variables appearing in a \bdt $\phi$.
\end{defi}

A \bdt represents a Boolean function in the obvious way: given a valuation of Boolean variables ($v:x_i\mapsto\zero/\one$),
go down from the root
following the left/right branch of each internal vertex according to the assignment. 
When a leaf is reached one has the output of the function. 

\begin{exa}\label{ex_xor}
	The following \bdt represents the $x \text{\,\tt XOR\,} y$ function:
\begin{center}
\begin{tikzpicture}
\node[circle,draw](x){$y$}
  child{node[circle,draw]{$x$} child{node{$\zero$}} child{node{$\one$}}} 
  child [missing]
  child{node[circle,draw]{$x$} child{node{$\one$}} child{node{$\zero$}}} 
  ;
\end{tikzpicture}
\end{center}
\end{exa}
%\begin{nota}\label{not_val}
%Given a \bdt $\phi$ and a valuation $v$, we write
%$v(\phi)$ the value we get following this procedure.
%\end{nota}

\begin{nota}\label{not_bdt}
	We use the notation 
	$\itef x\phi\psi$
	to describe \bdt inductively, with the meaning: the root is labeled by the variable $x$, with left subtree $\phi$ and right subtree $\psi$.
	Moreover we write $\bar\phi$ for the \emph{negation} of $\phi$: take $\phi$ and swap the $\zero/\one$ at the leaves.
\end{nota}

Different trees can represent the same function, and we have an equivalence relation accounting for that.

\begin{defi}\label{def_bdteq}
	We say that two \bdt $\phi,\psi$ are \emph{equivalent} (notation $\phi\booleq\psi$) if they represent the same Boolean function.
\end{defi}

\begin{exa}\label{ex_or}
	The two following \bdt represent the boolean function $x\text{\,\tt OR\,} (\text{\,\tt NOT\,}y)$ and are therefore equivalent:
\begin{center}
\begin{tikzpicture}
\node[circle,draw](x){$x$}
  child{node[circle,draw]{$y$} child{node{$\one$}} child{node{$\zero$}}} child{node{$\one$}}
  ;
\end{tikzpicture}
\qquad\qquad\qquad\qquad
\begin{tikzpicture}
\node[circle,draw](x){$y$}
  child{node{$\one$}}
  child{node[circle,draw]{$x$} child{node{$\zero$}} child{node{$\one$}}} 
  ;
\end{tikzpicture}
\end{center}
\end{exa}

As above, we can wonder about the complexity of the associated decision problem \bdteqprob. While the equivalence problem for Boolean formulas
is \coNP-complete in general and in many subcases including binary decision diagrams \cite%[theorem 2.4.2]
{wegener00}, 
we will see that the situation is different for \bdt due to their simpler structure.

\section*{Outline of the paper}

We begin in \autoref{sec_mall} by covering some background material on \mallm and notions of proofnet for this
fragment: monomial proofnets and the notion of slicing. Then, we introduce in \autoref{sec_bdtslice} an
intermediary notion of proof representation, \bdt slicings, that will help us to relate proofs in \mallm and
BDD. In \autoref{sec_reduction}, we establish further encodings and reductions.

\textit{The conference version \cite{bagnol15} of this work contained a mistake in a proof which we fix
in this version by considering a variant of \mallm where the exchange rule is taken seriously into account.
This is discussed further in \autoref{rem_mistake}}.

%forces to
%reformulate the main result: instead of having \mallmeqprob\ \Logspace-complete, we can only show that it
%is in \Logspace, while the equivalence of \bdt slicings is \Logspace-complete. This is discussed at the
%end of \autoref{sec_redlin}.}

\section{Multiplicative-Additive Proof Equivalence} \label{sec_mall}
We will be interested in the multiplicative-additive without units part of linear logic, \mallm, and
more precisely its intuitionistic fragment. We choose to work primarily in this fragment for two reasons:
first we want to avoid any impression that our constructions rely on the classical nature of \mallm,
second we believe that this makes the whole discussion more accessible outside the linear logic community.
For reductions and encodings \emph{to} \mallm, this actually makes for stronger results
than if they were formulated in the classical case.
Conversely, when being limited the intuitionistic case would be restrictive (when deciding equivalence) 
we will explain how constructions can be extended to the classical case.

We consider formulas that are built inductively from atoms which we write $\alpha,\beta,\gamma,\dots$
%	their duals $\nalpha,\nbeta,\ngamma,\dots$ 
and the binary connectives $\oplus$ and $\linear$.
We write formulas as $A,B,C,\dots$ unless we want to specify they are atoms and 
sequents as $\Gamma\vdash A$ with $\Gamma$ a sequence of formulas.
The $\oplus$ occurring on the of the left of the $\vdash$ symbol may be labeled by boolean variables
($\labplus x$, we
omit the label when not relevant) 
and we assume that different occurrences of the connective in a sequent carry a different label.
We consider an $\eta$-expanded logic,
which simplifies proofs and definitions. We do not include the \cut rule in our study, 
since in a static situation (we are not looking at the \cut-elimination procedure) it can always be encoded \wloss using
the $\dlinear$ rule. Moreover it can be argued that proof equivalence with \cut should include \cut-elimination,
which is known to be \coNP-complete \cite{mt03} as we already mentioned. 
Also we work with an explicit exchange rule, labeled by the positions $i,j$ of the elements of the context it swaps:
%.
%Here are the rules of \mallm:

\smallskip
\begin{center}
	~
	\hfill
	\begin{prooftree}
		\Hypo{\alpha \vdash \alpha}
	\end{prooftree}
	\hfill
	\begin{prooftree}
		\Hypo{\Gamma,A\vdash B}
		\Infer{1}[$\linear$]{\Gamma \vdash A\linear B}
	\end{prooftree}
	\hfill
	\begin{prooftree}
		\Hypo{\Gamma\vdash A}
		\Hypo{B,\Delta\vdash C}
		\Infer{2}[$\dlinear$]{\Gamma,A\linear B,\Delta \vdash C}
	\end{prooftree}
	\hfill
	\begin{prooftree}
		\Hypo{\Gamma\vdash C}
%		\Hypo{A_1,\dots,A_i,\dots,A_j,\dots,A_n\vdash C}
%		\Infer{1}[$\ex ij$]{A_1,\dots,A_j,\dots,A_i,\dots,A_n\vdash C}
		\Infer{1}[$\ex ij$]{\Gamma'\vdash C}
	\end{prooftree}
	\hfill
	~
	
	\medskip%%%%%%%%%%%%%%%%%%%%%%%%%%%%%%%%%%%%%%%%%%%%%%%
	~
	\hfill
	\begin{prooftree}
		\Hypo{\Gamma\vdash A}
		\Infer{1}[$\lplus$]{\Gamma\vdash A\plus B}
	\end{prooftree}
	\hfill
	\begin{prooftree}
		\Hypo{\Gamma\vdash B}
		\Infer{1}[$\rplus$]{\Gamma\vdash A\plus B}
	\end{prooftree}
	\hfill
	\begin{prooftree}
		\Hypo{\Gamma, A\vdash C}
		\Hypo{\Gamma,B\vdash C}
		\Infer{2}[$\dplus{x}$]{\Gamma,A\labplus x B\vdash C}
	\end{prooftree}
	\hfill
	~
\end{center}

\smallskip
%% .
%We do not include the \cut rule in our study, 
%since in a static situation (we are not looking at the \cut-elimination procedure of \mallm) it can always be encoded \wloss using
%the $\otimes$ rule. We also leave the axiom rule and $\vdash$ symbol implicit by convention:

%\input{mallm_rules}

\begin{nota}
We will denote this proof system by \lamlp emphasizing that, even if we will
not explicitly manipulate \Lterms, we are working with the Curry-Howard counterpart of a typed linear 
\Lcalc with sums.

Moreover, we will use the notation \begin{prooftree}\Theproof\pi{\Gamma\vdash A}\end{prooftree}
for \enquote{the proof $\pi$ of conclusion $\Gamma\vdash A$}.
\end{nota}

%The intuitionnistic fragment of \mallm uses the connectives $\oplus$ and $\linear$. Sequents are of the form
%$\Gamma \vdash A$ with labels on the $\oplus$ appearing on the left side:

%\input{mallm_rules}

%We call this system 

\begin{rem}\label{rem_proofs}
	Any time we will look at a \lamlp proof from a complexity perspective, we will consider they are represented as
%	a conclusion sequent, together with a tree which vertices are labelled by rules.
	a tree with nodes labelled with a rule and the sequent that is the conclusion of that rule.
\end{rem}

We already discussed the idea of proof equivalence in general, its synthetic formulation based on denotational semantics
and how it can be captured algorithmically as rule permutations in certain cases.
We will not go through all the details specific to the \mallm case, as we already have an available equivalent characterization
in terms of \emph{slicing} in the literature~\cite{hvg05,hvg15}
which we review in \autoref{sec_slicing}. Instead, in addition to the example we saw in the introduction
%(in the intuitionnistic variant)
let us rather focus on another significant case:
\begin{center}
	\begin{prooftree}
		\Theproof\nu{\vdash D}
		\Theproof\pi{E,A\vdash C }
		\Theproof\mu{E,B\vdash C}
		\Infer{2}[$\dplus x$]{E,A\labplus x B\vdash C}
		\Infer{2}[$\dlinear$]{D\linear E, A\labplus x B \vdash C}
	\end{prooftree}
	\quad{\large$\malleq$}\qquad
	\begin{prooftree}
		\Theproof\nu{\vdash D}
		\Theproof\pi{E,A\vdash C}
		\Infer{2[$\dlinear$]}{D\linear E ,A\vdash C}
		\Theproof\nu{\vdash D}
		\Theproof\mu{E,B\vdash C}
		\Infer{2}[$\dlinear$]{D\linear E ,B\vdash C}
		\Infer{2}[$\dplus x$]{D\linear E ,A\labplus x B\vdash C}
	\end{prooftree}
\end{center}
In this permutation, the $\dlinear$ rule gets lifted above the $\dplus x$ rule. But doing so, notice
that we created two copies of $\nu$ instead of one, therefore the size of the prooftree has grown.
Iterating on this observation, it is not hard to build pairs of proofs that are equivalent, but
one of which is exponentially bigger than the other.
This \enquote{distributivity phenomenon} \cite{hh16} is indeed where the difficulty of proof equivalence in \lamlp (and hence \mallm) lies.
As a matter of fact, this permutation of rules alone would be enough to build the encoding of binary decision trees
by \lamlp proofs presented in \autoref{sec_reduction}.

	\subsection{Monomial proofnets} \label{sec_monomial}
The first attempt in the direction of a notion of proofnet for \mallm is due to J.-Y.Girard~\cite{girard96},
followed by a version with a full \cut-elimination procedure by O.~Laurent and R.~Maieli~\cite{lm08}.

While proofnets for multiplicative linear logic without units were introduced along with linear 
logic itself~\cite{girard87}, extending the notion to the multiplicative-additive without units fragment
proved to be a true challenge, mainly
because of the \emph{superposition} at work in the $\with$ rule ($\dplus{}$ in the intuitionistic fragment).

Girard's idea was to represent the superposed \enquote{versions} of the proof by a graph
with Boolean formulas (called a \emph{weight}) to each edge, with one variable for each $\with$ connective in the conclusion $\Gamma$.
To retrieve the version corresponding to some selection of the left/right branches
of each $\with$, one then just needs to evaluate the Boolean formulas with the corresponding valuation of their
variables, keeping only the parts of the graph which weight evaluates to $\one$. Here is an example of monomial
proofnet, from Laurent and Maieli's article:

\begin{center}
	\includegraphics[width=10cm]{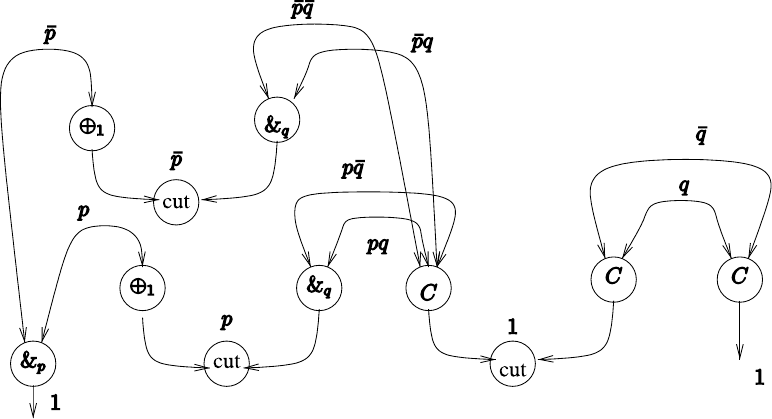}
\end{center}

This notion of proofnet fails to be a canonical representation of equivalence classes of proofs: for instance the rule 
permutation we just saw is not
interpreted as an equality. Still, we will retain this idea of using Boolean formulas and put it to work in
\autoref{sec_bdtslice}: suitably regrouping and merging links, it appears that instead of a collection of
scattered monomials we can use binary decision trees. This will lead us to the definition of
\bdt slicing.

	\subsection{Slicings and proof equivalence} \label{sec_slicing}
The idea of slicing dates back to J.-Y.~Girard's original article on proofnets for \mallm~\cite{girard96},
and was even present in the original article on linear logic~\cite{girard87}. It has been really developed,
addressing the many technical problems it poses by D.~Hughes and R.van Glabbeek~\cite{hvg05} to set up the only
known notion of canonical proofnets for \mallm.
It amounts to the natural point of view already evoked just above, seeing the $\with$ rule as introducing
superposed variants of the proof, which are eventually to be selected from in the course of \cut-elimination.

But if we have two alternative \emph{slices} for each $\with$ connective of a sequent $\Gamma$ and all
combinations of slices can be selected independently, we readily see that the global number of slices can be
exponential in the number of $\with$ connectives in $\Gamma$.
This is indeed the major drawback of the representation of proofs as a set of slices: the size of objects
representing proofs may grow exponentially with the size of the original proofs, impairing the 
possibility of proofnets based on this idea to be low-complexity. 

Let us give a variant of the definition
in the case of the intuitionistic \lamlp:

\begin{defi}[slicing]
	Given a \lamlp sequent $\Gamma\vdash A$, a \emph{slice} of $\Gamma\vdash A$ is a set
	of %$\dual(\Gamma)$,
	(unordered) pairs of occurrences of atoms in $\Gamma\vdash A$.
	Then, a \emph{slicing} of $\Gamma\vdash A$ is a finite set of slices of $\Gamma\vdash A$, and
	to any \lamlp proof $\pi$, we associate a slicing $\slicing\pi$ by induction:

	\begin{itemize}
		\item If $\pi=\begin{prooftree}\Hypo{\alpha\vdash\alpha}\end{prooftree}$ then 
		      $\slicing\pi$ is the set containing only the linking $\big\{[\alpha,\alpha]\big\}$
		
%		\bigskip
		\item If $\pi=\,
		      \begin{prooftree}
		      \Theproof{\mu}{\Gamma,A\vdash B}
		      \Infer{1}[$\linear$]{\Gamma\vdash A\linear B}
		      \end{prooftree}$
		      \ \begin{minipage}[t]{10cm}
		      then $\slicing\pi=\slicing\mu$ \\
		      (seeing atoms 
		      of $A\linear B$ as the corresponding ones of $A$ and $B$)
		      \end{minipage}
		
		\bigskip
		\item If $\pi=\,
		      \begin{prooftree}
		      \Theproof{\mu}{\Gamma\vdash A}
		      \Theproof{\nu}{\Delta,B\vdash C}
		      \Infer{2}[$\dlinear$]{\Gamma,\Delta,A\linear B\vdash C}
		      \end{prooftree}$
%		      \ \begin{minipage}[t]{10cm}
		      then 
		      $\slicing\pi=\set{\lambda\cup\lambda'}{\lambda\in \slicing\mu\,,\,\lambda'\in\slicing\nu}$ 
%		      \\
%		      (again seeing atoms 
%		      of $A\linear B$ as the corresponding ones of $A$ and $B$)
%		      \end{minipage}

		\bigskip
		\item If $\pi=\,
		      \begin{prooftree}
		      \Theproof{\mu}{\Gamma\vdash A}
		      \Infer{1}[$\ex ij$]{\Gamma'\vdash A}
		      \end{prooftree}$
		      then 
		      $\slicing\pi=\slicing\mu$ (accordingly reindexed)

		\bigskip
		\item If $\pi=\,
		      \begin{prooftree}
		      \Theproof{\mu}{\Gamma\vdash A}
		      \Infer{1}[$\lplus$]{\Gamma\vdash A\plus B}
		      \end{prooftree}$
		      then 
		      $\slicing\pi=\slicing\mu$,
		and likewise for $\rplus$
		
		\bigskip
		\item If $\pi=\,
		      \begin{prooftree}
		      \Theproof{\mu}{\Gamma,A\vdash C}
		      \Theproof{\nu}{\Gamma,B\vdash C}
		      \Infer{2}[$\dplus{}$]{\Gamma,\Delta,A\oplus B\vdash C}
		      \end{prooftree}$
		      then 
		      $\slicing\pi=\slicing\mu \cup \slicing\nu$
	\end{itemize}
\end{defi}

\begin{exa}
	The following proof $\pi$ (we index occurences of atoms to differentiate them)
	\begin{prooftree*}
		\Hypo{\alpha_l\vdash \alpha_r}
		\Infer{1}[$\lplus$]{\alpha_l\vdash\alpha_r\oplus\beta_r}
		\Hypo{\beta_l\vdash\beta_r}
		\Infer{1}[$\rplus$]{\beta_l\vdash\alpha_r\oplus\beta_r}
		\Infer{2}[$\dplus {}$]{\alpha_l\oplus\beta_l\vdash\alpha_r\oplus\beta_r}
	\end{prooftree*}
	gets the (two-slices) slicing $\slicing\pi=\big\{\{\,[\alpha_l,\alpha_r]\,\},\{\,[\beta_l,\beta_r]\,\}\big\}$.
\end{exa}

\begin{rem}
	In the $\dlinear$ rule, it is clear that the number of slices is multiplied. This is just what is
	needed in order to have a combinatorial explosion: pick any proof interpreted by two (or more) slices,
	and combine $n$ copies of this proof to get a proof that has linear size in $n$, but is interpreted with
	$2^n$ slices.
\end{rem}

The main result we need from the work of Hughes and van Glabbeek is their canonicity result~\cite{hvg05,hvg15}
formulated in the case of \mallm, that specializes readily to the intuitionistic fragment \lamlp.

\begin{thm}[slicing equivalence]\label{th_sliceeq}
	Let $\pi$ and $\nu$ be two \lamlp proofs.
	We have that $\pi$ and $\nu$ are equivalent if and only if $\slicing\pi=\slicing\nu$.
\end{thm}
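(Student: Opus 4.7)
The plan is to deduce this as a specialization of the Hughes--van Glabbeek canonicity theorem for classical \mallm. I would first translate each \lamlp proof to a classical \mallm proof via the standard one-sided embedding (sending $A\linear B$ to $\lneg A \parr B$ on the antecedent side and $\dplus{}$ to $\with$) and verify that our slicing $\slicing{\cdot}$ agrees, up to this translation, with the classical slicing of~\cite{hvg05,hvg15}. Once this identification is made, the forward implication ($\pi\malleq\nu \Rightarrow \slicing\pi=\slicing\nu$) is immediate, since \lamlp equivalence is preserved by the embedding.

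For the converse, Hughes and van Glabbeek's theorem gives classical \mallm equivalence of the two embedded proofs, and one must lift this back to \lamlp-equivalence by observing that the rewriting system on classical proofs, when both endpoints are intuitionistic, can be realized without ever stepping outside the intuitionistic fragment. A more self-contained alternative, which I would likely use instead, is to prove both halves directly inside \lamlp. Soundness is a case analysis over the generating rule permutations of $\malleq$: for each pair of adjacent commuting rules ($\dplus{x}/\linear$, $\dplus{x}/\dlinear$, $\dplus{x}/\dplus{y}$, $\dlinear/\dlinear$, the exchanges $\ex ij$ with any adjacent rule, etc.) one checks that the set of axiom linkings produced in each slice is unchanged, which holds essentially because $\slicing{\cdot}$ only records which atoms of the conclusion are paired together under which additive choices, independently of the order in which multiplicative rules are applied.

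Completeness is the hard part. The approach is to define a canonical form on \lamlp proofs by pushing every $\dplus{}$ rule as close to the axioms as possible using the equivalence-preserving permutations, then show (i) every proof is $\malleq$-equivalent to its canonical form, and (ii) two canonical forms with the same slicing are syntactically identical. The main obstacle is that the $\dplus{x}/\dlinear$ permutation duplicates subproofs, so controlling termination and confluence of the normalization requires a careful induction (e.g., on the number of $\dplus{}$ rules above each $\dlinear$). The uniqueness in (ii) is then obtained by induction on the conclusion sequent: a slicing determines, for each additive choice of the $\dplus{}$'s in the conclusion, exactly which axiom linkings appear in that branch, and since the formulas in the sequent fix how each linking must be built up by $\linear$ and $\dlinear$ rules, the canonical proof is reconstructed uniquely.
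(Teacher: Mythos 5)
First, note that the paper does not actually prove this theorem: it is imported wholesale from Hughes and van Glabbeek, the only remark being that their canonicity result for classical \mallm \enquote{specializes readily} to the intuitionistic fragment \lamlp. So the first half of your proposal --- embed \lamlp into one-sided \mallm (being careful with polarities: an antecedent $A\linear B$ becomes $A\otimes\lneg B$ on the right of the one-sided sequent, not $\lneg A\parr B$), check that $\slicing{\cdot}$ matches the classical slicing under this embedding, and invoke their theorem --- is exactly what the paper intends, and the one point you rightly flag (that interconvertibility of the embedded proofs can be realized without leaving the intuitionistic fragment) is the entire content of \enquote{specializes readily}. That route is sound and matches the paper.

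The self-contained alternative you say you would prefer is, in effect, an attempt to reprove the Hughes--van Glabbeek canonicity theorem, and the sketch has concrete gaps. (a) Your step (ii) is false as stated: two proofs in which every $\dplus{}$ has been pushed maximally toward the axioms and which have the same slicing are not syntactically identical --- they still differ by multiplicative permutations ($\dlinear/\dlinear$, $\linear/\dlinear$, exchange). To conclude, you must quotient each slice by these and invoke canonicity of \mllm proof nets (a linking on a sequent determines a multiplicative proof uniquely \emph{up to permutation}); this is a theorem, not a consequence of \enquote{the formulas fixing how each linking is built up}. (b) The normalization itself is the hard part: the $\dplus{}/\dlinear$ permutation duplicates subproofs, and making the two premises of distinct $\dplus{}$ rules line up so that they can be compared requires a genuine termination and confluence argument; this is precisely where the known difficulty of \mallm proof nets lives, and the one-line induction you gesture at will not close it. (c) Your uniqueness step presupposes that a slice determines which additive resolution produced it --- otherwise two proofs could share the set $\slicing\pi$ while distributing its elements differently over the $\dplus{}$ branches. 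This is true, but it needs the lemma that every selected branch of a $\dplus{}$ contributes at least one linked atom of the selected subformula, which is exactly the kind of lemma the paper is forced to prove explicitly later for \bdt slicings. Unless you intend to rewrite a substantial portion of the Hughes--van Glabbeek development, stick with the specialization of their published result.
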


\section{\BDT slicings} \label{sec_bdtslice}
We now introduce an intermediate notion of representation of proofs which will be a central tool
in the rest of the article.
In a sense, it is a synthesis of monomial proofnets and slicings: acknowledging the fact that slicing makes
the size of the representation explode, we rely on \bdt to keep things more compact.
Of course, the canonicity property is lost, but this is exactly the point! Indeed, deciding whether two
\enquote{\bdt slicings} are equivalent is the reformulation of proof equivalence we rely
on in the next sections.

The basic idea is, considering a sequent $\Gamma\vdash A$, to use the boolean variable associated to
$\oplus$ occurrences in $\Gamma$
and associate a \bdt to each pair of (occurrences of) atoms in $\Gamma\vdash A$, indicating
the presence of an axiom rule linking the two, depending on which branch of the $\dplus{}$ rules we are sitting in.

For example in the proof
\begin{prooftree*}
	\Hypo{\alpha\vdash \alpha}
	\Infer{1}[$\lplus$]{\alpha\vdash\alpha\oplus\beta}
	\Hypo{\beta\vdash\beta}
	\Infer{1}[$\rplus$]{\beta\vdash\alpha\oplus\beta}
	\Infer{2}[$\dplus x$]{\alpha\oplus_x\beta\vdash\alpha\oplus\beta}
\end{prooftree*}
we have a $\beta\vdash\beta$ axiom only when we are selecting the right branch of the $\dplus{}$ rule. So the \bdt $\itef x\zero\one$
 should be associated to the pair $[\beta,\beta]$ (here we use \autoref{not_bdt}),
and conversely $[\alpha,\alpha]$ gets $\itef x\one\zero$.

We can lift this idea to a complete inductive definition:

\begin{defi}[\bdt slicing]\label{def_boolslicing}
	Given a \lamlp sequent $\Gamma\vdash A$, a \emph{\bdt slicing} of $\Gamma\vdash A$ is a function $\slicef B$ 
	that associates a \bdt to every (unordered) pair 
%	element 
	$[\alpha,\beta]$
%	of %$\dual[\Gamma]$,
%	the set of 
	of distinct occurrences of atoms of $\Gamma\vdash A$.
	
	To any \lamlp proof $\pi$, we associate a \bdt slicing $\bslicing\pi$ by induction on the
	tree structure of $\pi$:%, with the convention that if $\bslicing\pi[\alpha,\beta]$ is undefined then
%	implicitely assume in fact that $\bslicing\pi[\alpha,\beta]=\zero$:
	\begin{itemize}
		\item If $\pi=\begin{prooftree}\Hypo{\alpha \vdash\alpha}\end{prooftree}$ then 
		      $\bslicing\pi[\alpha,\alpha]=\one$
		
		\bigskip
		\item If $\pi=\,
		      \begin{prooftree}
		      \Theproof{\mu}{\Gamma,A\vdash B}
		      \Infer{1}[$\linear$]{\Gamma \vdash A\linear B}
		      \end{prooftree}$
		      \ \begin{minipage}[t]{10cm}
		      then $\bslicing\pi[\alpha,\beta]=\bslicing\mu[\alpha,\beta]$ 
		      \\ (seeing atoms 
		      of $A\linear B$ as the corresponding atoms of $A$ and $B$)
		      \end{minipage}
		
		\bigskip
		\item If $\pi=\,
		      \begin{prooftree}
		      \Theproof{\mu}{\Gamma\vdash A}
		      \Theproof{\nu}{\Delta,B\vdash C}
		      \Infer{2}[$\dlinear$]{\Gamma,\Delta,A\linear B\vdash C}
		      \end{prooftree}$
		      then 
		      $\bslicing\pi[\alpha,\beta]=
		      \begin{cases}
		      \bslicing\mu[\alpha,\beta]\!\!\!\! &\text{ if $\alpha,\beta$ are atoms of $\Gamma\vdash A$}\\
		      \bslicing\nu[\alpha,\beta]\!\!\!\! &\text{ if $\alpha,\beta$ are atoms of $\Delta,B\vdash C$}\\
		      \zero &\text{ otherwise} %%%%\footnotemark}
		      \end{cases}
		      $
		
		\bigskip
		\item If $\pi=\,
		      \begin{prooftree}
		      \Theproof{\mu}{\Gamma\vdash A}
		      \Infer{1}[\tt ex]{\Gamma'\vdash A}
		      \end{prooftree}$
		      \ \begin{minipage}[t]{10cm}
		      then $\bslicing\pi[\alpha,\beta]=\bslicing\mu[\alpha,\beta]$ (accordingly reindexed)
%		      \\ (after suitably rewiring)
		      \end{minipage}
		      
		\bigskip
		\item If $\pi=\,
		      \begin{prooftree}
		      \Theproof{\mu}{\Gamma\vdash A}
		      \Infer{1}[$\lplus$]{\Gamma\vdash A\plus B}
		      \end{prooftree}$
\begin{minipage}[t]{10cm}
		      then
		      $\bslicing\pi[\alpha,\beta]=
		      \begin{cases}
		      \bslicing\mu[\alpha,\beta] &\text{ if $\alpha,\beta$ are atoms of $\Gamma\vdash A$}\\
		      \zero &\text{ otherwise}
		      \end{cases}
		      $
		      \\
		      (and likewise for $\rplus$)
\end{minipage}
		
		\bigskip
		\item If $\pi=\,
		      \begin{prooftree}
		      \Theproof{\mu}{\Gamma,A\vdash C}
		      \Theproof{\nu}{\Gamma,B\vdash C}
		      \Infer{2}[$\dplus x$]{\Gamma,A\labplus x B\vdash C}
		      \end{prooftree}$\\\\
%		      \ \begin{minipage}[t]{15cm}
		      \hbox to 40 pt{\hfill}then
		      $\bslicing\pi[\alpha,\beta]= %$
%		      \\~\phantom{a}~
%		      $
		      \begin{cases}
		      \itef x{\bslicing\mu[\alpha,\beta]}\zero &\text{ if $\alpha$ or $\beta$ is an atom of $A$}\\
		      \itef x\zero{\bslicing\nu[\alpha,\beta]} &\text{ if $\alpha$ or $\beta$ is an atom of $B$}\\
		      \itef x{\bslicing\mu[\alpha,\beta]}{\bslicing\nu[\alpha,\beta]} &\text{ if $\alpha$ or $\beta$ are both atoms of $\Gamma,C$}\\
		      \zero &\text{ otherwise} \\
		      \end{cases}
		      $
%		      \end{minipage}
	\end{itemize}
\end{defi}

\noindent All cases except $\dplus{}$ are just ensuring proper branching of subformulas, while the
$\dplus{}$ case implements the idea behind the example before the definition: the left branch of the rule corresponds to the
left branch of the \bdt.

The equivalence of \bdt slicings is straightforward to define:

\begin{defi}
	We say that two \bdt slicings $\slicef M,\slicef N$ of the same $\Gamma\vdash A$ are \emph{equivalent}
	(notation $\slicef M\sliceeq \slicef N$) if for any pair $[\alpha,\beta]$
	we have $\slicef M[\alpha,\beta]\booleq \slicef N[\alpha,\beta]$ in the sense of \autoref{def_bdteq}.
\end{defi}

The usefulness of the notion of \bdt slicing comes from the fact that it captures exactly proof equivalence,
reducing it to \bdt equivalence.

\begin{thm}[proof equivalence]
	Let $\pi$ and $\nu$ be two \lamlp proofs, then $\pi\malleq\nu$ if and only if $\bslicing\pi\sliceeq\bslicing\nu$.
\end{thm}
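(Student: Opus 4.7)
The plan is to connect \bdt slicings with ordinary slicings through Boolean evaluation, and then invoke Theorem~\ref{th_sliceeq}. Given a valuation $v$ of the variables labelling the $\oplus$-occurrences in the left-hand context of $\Gamma\vdash A$, define the evaluation $\llbracket\slicef B\rrbracket_v$ of a \bdt slicing to be the slice $\set{[\alpha,\beta]}{\slicef B[\alpha,\beta]\text{ evaluates to }\one\text{ under }v}$. By \autoref{def_bdteq}, two \bdt are equivalent iff they evaluate equally under every valuation, so $\bslicing\pi\sliceeq\bslicing\nu$ iff $\llbracket\bslicing\pi\rrbracket_v=\llbracket\bslicing\nu\rrbracket_v$ for all $v$.

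The technical core is a straightforward induction on $\pi$ establishing the bridge
$$\slicing\pi \;=\; \set{\llbracket\bslicing\pi\rrbracket_v}{v\text{ valuation}}.$$
All cases except $\dplus x$ are bookkeeping (the \bdt and ordinary clauses are literally the same). In the $\dplus x$ case, the three branches of the inductive clause in \autoref{def_boolslicing} are designed precisely so that evaluating at $v(x)=\zero$ gives the slice $\llbracket\bslicing\mu\rrbracket_v$ (the side producing $\zero$ BDTs for atoms of $B$ kills the links attached to that branch) while $v(x)=\one$ yields $\llbracket\bslicing\nu\rrbracket_v$. Thus ranging $v$ over its values on $x$ exactly reproduces $\slicing\mu\cup\slicing\nu=\slicing\pi$.

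With this bridge in hand, the $(\Leftarrow)$ direction is immediate: if $\bslicing\pi\sliceeq\bslicing\nu$, then all evaluations $\llbracket\bslicing\pi\rrbracket_v$ and $\llbracket\bslicing\nu\rrbracket_v$ coincide, hence $\slicing\pi=\slicing\nu$, and \autoref{th_sliceeq} gives $\pi\malleq\nu$. The $(\Rightarrow)$ direction is the main obstacle, because $\slicing\pi=\slicing\nu$ as sets does not \emph{a priori} entail the \emph{pointwise-in-$v$} equality of evaluations needed for $\sliceeq$. The plan to close this gap is to observe that fixing a valuation $v$ defines a natural projection of any \lamlp proof $\pi$ onto the \mllm-like subproof $\pi|_v$ obtained by pruning each $\dplus x$ in $\pi$ to its $v(x)$-branch, and that this projection commutes with all the rule permutations generating $\malleq$: indeed, every permutation either takes place inside a $v$-selected region (and so acts as an \mllm permutation on $\pi|_v$), or acts on a pruned region (which does not affect $\pi|_v$). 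Hence $\pi\malleq\nu$ entails $\pi|_v\malleq\nu|_v$ in \mllm for every $v$; since \mllm proof equivalence is exactly coincidence of axiom linkings, and $\llbracket\bslicing\pi\rrbracket_v$ is precisely the axiom linking of $\pi|_v$, one obtains $\llbracket\bslicing\pi\rrbracket_v=\llbracket\bslicing\nu\rrbracket_v$ for all $v$, i.e. $\bslicing\pi\sliceeq\bslicing\nu$. A more pedestrian alternative, avoiding the projection argument, is to enumerate the finitely many rule permutations defining $\malleq$ and check directly that each one preserves $\bslicing\cdot$ up to $\sliceeq$; the distributivity permutation highlighted at the end of \autoref{sec_mall} is the only non-trivial case, and it corresponds exactly to pushing an $\texttt{if}\ x\ \rhd\ \cdot\ \talloblong\ \cdot$ node under a common binary context, which is a $\booleq$-equality of \bdt.
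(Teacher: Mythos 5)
Your first half coincides with the paper's: you set up the bridge $\slicing\pi=\set{v(\bslicing\pi)}{v\text{ a valuation}}$ (the paper writes $S_{\bslicing\pi}$ for the right-hand side) and obtain the implication from $\bslicing\pi\sliceeq\bslicing\nu$ to $\slicing\pi=\slicing\nu$ to $\pi\malleq\nu$ via \autoref{th_sliceeq}. You also correctly isolate the crux of the converse: equality of the \emph{sets} of slices does not obviously yield equality of the evaluations at each \emph{fixed} $v$.

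Your resolution of that crux, however, has a genuine gap. The projection argument (pruning $\pi\mapsto\pi|_v$ commutes with rule permutations, hence $\pi\malleq\nu$ gives $\pi|_v\malleq\nu|_v$, hence equal axiom linkings) presupposes that the semantically defined relation $\malleq$ is \emph{generated by} local rule permutations, i.e.\ that any two proofs with the same interpretation in every denotational semantics are connected by a chain of permutations. That completeness statement is nowhere established in the paper --- the only syntactic handle on $\malleq$ available is \autoref{th_sliceeq} itself, which delivers only the set-level equality $\slicing\pi=\slicing\nu$ --- and it is essentially as deep as the Hughes--van Glabbeek canonicity theorem you would be trying to refine; your \enquote{pedestrian alternative} has the same dependency and in any case only gives the soundness direction. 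The paper closes the gap combinatorially instead, with two lemmas proved by induction on $\pi$: (i) for every valuation $v$ and every formula $A_i$ of the conclusion, $v(\bslicing\pi)$ contains a pair touching $A_i$; and (ii) if $v(x)=\zero$ for the label of a subformula $A\labplus{x}B$, then $v(\bslicing\pi)$ contains an atom of $A$ and no atom of $B$ (and symmetrically for $\one$). Hence the valuation is recoverable from the slice: $v\neq w$ forces $v(\bslicing\pi)\neq w(\bslicing\nu)$ even across two different proofs of the same sequent, so a witness $v$ with $v(\bslicing\pi)\neq v(\bslicing\nu)$ already gives $v(\bslicing\pi)\notin S_{\bslicing\nu}$, whence $\slicing\pi\neq\slicing\nu$ and, by \autoref{th_sliceeq}, $\pi\not\malleq\nu$. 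You need these two lemmas (or some other argument that stays within the resources of \autoref{th_sliceeq}) to legitimately pass from set equality to pointwise equality.
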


\begin{proof}
	We will rely on the characterization of proof equivalence by slicing, \ie \autoref{th_sliceeq} and
	show that $\slicing\pi=\slicing\nu$ if and only if $\bslicing\pi\booleq\bslicing\nu$.
	
		\newcommand{\mB}{\mathcal B}
	
	To a \bdt slicing $\mB$, we can associate a slice $v(\mB)$ for each valuation $v$ of the variables
	occurring in $\mB$ by setting
	$v(\mB)=\set{[\alpha,\beta]}{v(\mathcal B[\alpha,\beta])=\one}$ (where $v(B[\alpha,\beta])$ denotes the
	result of evaluating a \bdt against a valuation of variables) and then a slicing
	$S_\mB=\set{v(\mB)}{v \text{ valuation}}$. Note that in the process different valuations yielding
	identical slices
	might be identified. By definition, it is clear that if $\mB$ and $\mB'$
	involve the same variables and $\mB\sliceeq\mB'$ then $S_\mB=S_{\mB'}$.
	
	It is straightforward to see that $S_{\bslicing\pi}=\slicing\pi$ so we have that $\bslicing\pi\booleq\bslicing\nu$
	implies $\slicing\pi=\slicing\nu$
	
	Conversely, suppose $\bslicing\pi\not\booleq\bslicing\nu$, so that there is a $v$ such that
	$v(\bslicing\pi)\neq v(\bslicing\nu)$. To conclude that $S_{\bslicing\pi}\neq S_{\bslicing\nu}$, we need
	the following lemmas:
	\begin{lemma}
		Let $\pi$ be a \lamlp proof of $\extlist{A_1}{A_n}\vdash A_0$ and $v$ a valuation of the variables of $\pi$ then 
		for any $i$, $v(\bslicing\pi)$ contains at least one pair with an atom in $A_i$.
	\end{lemma}
	\begin{lemma}
		Let $\pi$ be a \lamlp proof of $\Gamma\vdash C$ and $x$ a label of some $\dplus x$ rule introducing the
		subformula $A\labplus x B$ and $v$ a valuation of the variables of $\pi$ mapping $x$ to $\zero$ then:
		\begin{itemize}
			\item The pairs in $v(\bslicing\pi)$ contain no atom of $B$.
			\item At least one pair in $v(\bslicing\pi)$ contains an atom of $A$.
		\end{itemize}
\hspace{\parindent} (and conversely if $v$ maps $x$ to $\one$)
	\end{lemma}
%\begin{minipage}{\textwidth-\parindent}
\begin{proof}[Proof of the lemmas]
		The first one follows from a straightforward induction on $\pi$.

		The second is also by induction on $\pi$: going through the cases of \autoref{def_boolslicing} we see
		any rule that does not introduce $A\labplus x B$ itself but has it as a subformula will preserve these two properties.
		In the case of a $\dplus x$ rule branching two proofs $\mu$ and $\nu$, by definition any pair with atoms of $B$ will evaluate to
		$\zero$ if $x$ does, and hence cannot be part of $v(\bslicing\pi)$; conversely, the first lemma provides us
		with a pair containing an atom of $A$ in $v(\bslicing\mu)$ which will still be present in $v(\bslicing\pi)$.
\end{proof}
%\end{minipage}

	\bigskip
	\noindent A consequence of the second lemma is that for any \lamlp proofs $\pi$ and $\nu$ with the same
	conclusion, and two \emph{different} valuations $v,w$ of their variables we have $v(\bslicing\pi)\neq w(\bslicing\nu)$:
	just apply the lemma with $x$ a variable on which $v,w$ differ.
	This means we have $v(\bslicing\pi)\neq v(\bslicing\nu)$ and also if $v\neq w$, $v(\bslicing\pi)\neq w(\bslicing\nu)$,
	so that $v(\bslicing\pi)\not\in S_{\bslicing\nu}$ and therefore $S_{\bslicing\pi}\neq S_{\bslicing\nu}$,
	that is $\slicing\pi\neq\slicing\nu$.
\end{proof}

This theorem will allow us to decide proof equivalence by reducing it to \bdt equivalence.
Therefore let us have a look at the complexity of computing \bdt slicings.

\begin{prop}\label{prop_logred}
	For any \lamlp proof $\pi$ and any pair $[\alpha,\beta]$,
	we can compute $\bslicing\pi[\alpha,\beta]$ in \Logspace.
\end{prop}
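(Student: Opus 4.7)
The plan is to read off a logspace algorithm directly from the inductive definition of $\bslicing\pi$. The key observation is that every clause of that definition either emits a small local piece of output (namely $\zero$, $\one$, or a top node of the form $\itef x \phi \psi$) and delegates the rest to the immediate subproof(s); moreover, which clause applies and where the atom occurrences $\alpha, \beta$ sit in the premise sequents is determined by purely local information at the current rule node. Computing $\bslicing\pi[\alpha,\beta]$ therefore amounts to performing a guided depth-first traversal of $\pi$ while writing the resulting \bdt onto a write-only output tape in a prefix linearization, \eg first $x \rhd$, then the linearization of the left subtree, then $\talloblong$, then the linearization of the right subtree.

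The algorithm keeps on its work tape only (i)~a pointer to the current rule node of $\pi$, (ii)~the current positions of $\alpha$ and $\beta$ in that node's conclusion sequent, and (iii)~a constant-size flag indicating whether it is descending into a subproof or returning from one. Starting at the root, it inspects the local rule and does one of: emit $\one$ (axiom with $\alpha = \beta$); emit $\zero$ (\eg a $\dlinear$ sending the two atoms to different premises, or a $\lplus/\rplus/\dplus x$ context in which one of the atoms has been discarded); descend into the unique relevant subproof after updating the positions of $\alpha, \beta$ (for $\linear$, $\ex{i}{j}$, and the single-relevant-premise subcases of $\dlinear$, $\lplus$, $\rplus$, $\dplus x$); or, in the case of a $\dplus x$ with both atoms still in $\Gamma, C$, emit $x \rhd$, descend into the left subproof, on return emit $\talloblong$, descend into the right subproof, and on return ascend further. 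The asymmetric $\dplus x$ clauses are handled analogously, substituting a literal $\zero$ for the irrelevant recursive call.

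Simulating this recursion in logspace uses the standard parent-pointer DFS trick: upon returning from a subcall, the machine consults the parent's rule type and the child it just came from to decide the next move (emit a separator and descend into the sibling, emit a trailing $\zero$, or continue ascending). Since $\pi$ is part of the input, navigation to parents and children, as well as the purely local rewiring of atom positions across a single rule, takes only $O(\log|\pi|)$ bits. The output is written monotonically and never re-read, so it does not count against the space budget.

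The only real subtlety is precisely this branching step for $\dplus x$, where two independent sub-traversals must be interleaved without a genuine recursion stack; this is exactly what the parent-pointer DFS idiom resolves. Everything else is a routine constant-space local computation from rule type and atom positions, which closes the argument.
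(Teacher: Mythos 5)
Your proposal is correct and follows essentially the same route as the paper: a single traversal of the prooftree guided by the inductive clauses of the \bdt slicing definition, keeping only a pointer (plus the local atom positions) on the work tape and emitting the tree on the output tape. You simply spell out in more detail the one genuine subtlety --- interleaving the two sub-traversals at a branching $\dplus{x}$ node via the parent-pointer DFS idiom --- which the paper's own proof leaves implicit.
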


\begin{proof}
	To build $\bslicing\pi[\alpha,\beta]$ we only have to go through the prooftree and apply \autoref{def_boolslicing},
	creating new $x$ vertices when visiting $\dplus x$ rules, passing through $\linear$ and $\oplus$ rules,
	following only the relevant branch of $\dlinear$ rules \etc
	
	At any step we do not need to remember any other information than a pointer to our position in the prooftree,
	which requires only logarithmic space.
\end{proof}

\subsection{Proof equivalence is \Logspace-complete}\label{ssec_logspace}

Thanks to the results we just established, we will be able to decide \lamlp proof equivalence in logarithmic
space. Since \autoref{prop_logred} basically reduces proof equivalence to \bdt equivalence in logarithmic
space, let us focus on \bdt equivalence for a moment.

\begin{defi}[compatible leaves]
	Consider two leaves $l,m$ of two \bdt $\phi,\psi$ and the two paths $p,q$ from the leaves to the root of their
	respective trees. We say these leaves are \emph{compatible} if there is no variable that appears both in 
	$p$ and $q$ while being reached from opposite left/right branches.
\end{defi}

A small example to illustrate the idea: in the following trees $\phi$ (left) and $\psi$ (right)
\begin{center}
\begin{tikzpicture}
\node[circle,draw](x){$x$}
  child{node[circle,draw]{$t$} child{node[circle,draw]{$z$} child{node{$\zero$}} child{node{$\one$}}} child{node{$\zero$}}}
  child[missing]
  child{
    node[circle,draw]{$y$} child{node{$\zero$}} child{node{$\zero$}}
    };
\end{tikzpicture}
\qquad\qquad
\begin{tikzpicture}
\node[circle,draw](x)
{$t$}
	child{node[circle,draw]{$x$}
		child{node{$\one$}}
		child{node[circle,draw]{$z$} 
			child{node{$\one$}}
			child{node{$\zero$}}
		}
	}
	child[missing]
	child{node[circle,draw]{$y$} child{node{$\one$}} child{node{$\one$}}}
;
\end{tikzpicture}
\end{center}
the only $\one$ leaf of $\phi$ is compatible with the leftmost $\one$ leaf of $\psi$, but is not compatible with its
only $\zero$ leaf.

\begin{lemma}
	Two \bdt $\phi$ and $\psi$ are not equivalent if and only if there is a $\one$ leaf of $\phi$ and
	a $\zero$ leaf of $\psi$ that are compatible, or conversely.
\end{lemma}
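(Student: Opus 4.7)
The plan is to exploit a single basic observation: evaluating a BDT on a valuation $v$ determines a \emph{unique} root-to-leaf path, namely the one obtained by following at each internal node labeled $x$ the branch prescribed by $v(x)$, and the output of the BDT on $v$ is just the label of the reached leaf. Through this correspondence, compatibility of two leaves $l$ of $\phi$ and $m$ of $\psi$ is precisely the statement that a single valuation can simultaneously trace the path from the root of $\phi$ to $l$ and the path from the root of $\psi$ to $m$.

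For the forward direction, I would start from $\phi\not\booleq\psi$ and pick a valuation $v$ witnessing $\phi(v)\neq\psi(v)$; up to exchanging $\phi$ and $\psi$, assume $\phi(v)=\one$ and $\psi(v)=\zero$. Evaluation under $v$ produces a unique root-to-leaf path in each tree, ending at a $\one$-leaf $l$ of $\phi$ and a $\zero$-leaf $m$ of $\psi$. Since both paths are generated by the same valuation, any variable common to both is assigned the same value on each, so the branch directions coincide; that is, $l$ and $m$ are compatible.

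For the backward direction, given compatible leaves $l$ (labeled $\one$ in $\phi$) and $m$ (labeled $\zero$ in $\psi$), let $p$ and $q$ be the corresponding root-to-leaf paths. I would define a partial valuation $v$ on the variables occurring along $p$ or $q$ by letting $v(x)$ be the branch direction taken at $x$. This is well-defined: freeness of BDTs (\autoref{def_bdt}) ensures that $x$ occurs at most once on each of $p$ and $q$, while compatibility ensures that if $x$ appears on both paths, the two prescribed branch directions agree. Extending $v$ arbitrarily to the remaining variables, the evaluations of $\phi$ and $\psi$ on $v$ are forced to trace exactly $p$ and $q$ respectively, yielding $\phi(v)=\one\neq\zero=\psi(v)$, so $\phi\not\booleq\psi$. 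The symmetric case (a compatible $\zero$-leaf of $\phi$ and $\one$-leaf of $\psi$) is identical.

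I do not anticipate any genuine obstacle: the only subtle point, worth flagging explicitly, is that the well-definedness of $v$ in the backward direction uses both halves of the hypothesis — freeness rules out clashes \emph{within} a single path, whereas compatibility rules out clashes \emph{between} the two paths.
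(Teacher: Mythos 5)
Your argument is correct and is essentially the paper's own proof, merely spelled out in more detail: both directions rest on the same correspondence between valuations and root-to-leaf paths, with compatibility (plus freeness) guaranteeing a consistent witnessing valuation. No divergence in approach and no gaps.
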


\begin{proof}
	If we have compatible $\zero$ and $\one$ leaves, we can build a number of valuations of the variables
	that will lead to these leaves when evaluating $\phi$ and $\psi$, making them not equivalent.
	
	Conversely, if $\phi$ and $\psi$ are not equivalent, then we have a valuation that have them evaluate differently.
	Both of these evaluations are reached by following a path from the roots to leaves with different values, and these
	two leaves must be compatible.
\end{proof}

With this lemma we can devise a decision procedure in logarithmic space for \bdteqprob.

\begin{prop}\label{prop_bdteq}
	The \bdt equivalence problem is in \Logspace.
\end{prop}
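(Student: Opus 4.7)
The plan is to exploit the characterization provided by the lemma just proved: the \bdt $\phi$ and $\psi$ satisfy $\phi \booleq \psi$ precisely when there is no $\one$-leaf in one of them that is compatible with a $\zero$-leaf in the other. Our deterministic logspace decision procedure therefore searches for a ``witness of non-equivalence'' and returns ``equivalent'' exactly when none is found.

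Concretely, the algorithm uses a pair of nested loops. The outer loop enumerates all pairs $(l,m)$ of leaves with $l$ a leaf of $\phi$, $m$ a leaf of $\psi$, and opposite labels, using pointers into the input representation. For each such pair we run a compatibility test: an inner loop iterates over every variable $x$ mentioned in the inputs, determining whether $x$ lies on the root-to-$l$ path of $\phi$ (and if so, on which side) and whether it lies on the root-to-$m$ path of $\psi$ (and on which side). As soon as some $x$ occurs on both paths coming from opposite sides, the pair is declared incompatible and we proceed to the next pair; if the inner loop completes without such a conflict, $(l,m)$ is compatible and we output ``not equivalent''. If both outer enumerations complete without finding a compatible opposite-label pair, we output ``equivalent''. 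Correctness follows from the lemma.

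All of these sub-procedures fit into $O(\log n)$ working space. The outer enumerations are just counters pointing into the input tree representation. The inner test---does a variable $x$ appear on the root-to-$l$ path in $\phi$, and on which side?---is essentially a tree ancestor query: it suffices to walk upwards from $l$ one pointer at a time and, when a node labelled $x$ is encountered, record whether the previous step came from its left or right child. Since the \bdt{}s are free, $x$ appears at most once along this path, so the answer is well-defined. Each such walk uses only a single pointer of logarithmic size.

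The main conceptual subtlety is composition: the nested loops must reuse working space rather than accumulate it. This is automatic in our setting since each invocation of the compatibility test is purely sequential and its internal counters are released before the next iteration of the outer loop begins, so the total space stays logarithmic throughout and the procedure witnesses $\bdteqprob \in \Logspace$.
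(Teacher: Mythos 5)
Your proposal is correct and follows essentially the same route as the paper: both reduce the problem to searching for a compatible pair of opposite-labelled leaves via the preceding lemma, and both implement the compatibility test with a constant number of logarithmic-size pointers walking the root-to-leaf paths. The only cosmetic difference is that you enumerate variables and perform ancestor queries where the paper enumerates pairs of positions on the two paths directly; the space analysis is identical.
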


\begin{proof}
	With the above lemma, we see that given two \bdt $\phi,\psi$ all we have to do is look for compatible leaves
	with different values. To do that we can go through all the pairs of one leaf of $\phi$ and one leaf of $\psi$
	(two pointers) and then through all pairs of variables on the paths between the leaves and the root of their
	respective trees (again, two pointers), checking if the two leaves hold different values 
	while being compatible. This needs four pointers of logarithmic size, so we can decide
	\bdteqprob in \Logspace.
\end{proof}

\begin{prop}
	The equivalence problem \lamlpeqprob is in \Logspace.
\end{prop}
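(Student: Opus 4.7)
The plan is to stitch together the three preceding results. By the proof equivalence theorem, $\pi\malleq\nu$ if and only if $\bslicing\pi[\alpha,\beta]\booleq\bslicing\nu[\alpha,\beta]$ for every (unordered) pair $[\alpha,\beta]$ of distinct occurrences of atoms in the common conclusion sequent of $\pi$ and $\nu$. Deciding \lamlpeqprob thus reduces to iterating over all such pairs and, for each, checking the corresponding \bdt equivalence, rejecting at the first discrepancy and accepting if none is found.

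First, I would enumerate the pairs $[\alpha,\beta]$ by maintaining two pointers into the shared conclusion sequent: since the sequent has polynomial size, each pointer fits in logarithmic space, and the number of pairs is polynomial, so the outer loop itself costs only $O(\log n)$ space. For each such pair I invoke the logspace decision procedure of \autoref{prop_bdteq} on the pair of \bdts $\bslicing\pi[\alpha,\beta]$ and $\bslicing\nu[\alpha,\beta]$.

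The subtle point is that these \bdts are not part of the input but are produced by \autoref{prop_logred}, and their size may be polynomial (not logarithmic) in the proof, so they cannot simply be stored on the work tape. This is handled by the standard closure of \Logspace under composition: whenever the \bdt-equivalence algorithm needs to inspect a node of $\bslicing\pi[\alpha,\beta]$ (read its variable label, branch left or right, or test a leaf value), the transducer of \autoref{prop_logred} is invoked on the fly, navigating $\pi$ with only a constant number of pointers to compute the requested bit. Nesting the two logarithmic-space procedures this way yields a procedure in $O(\log n)$ total.

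The one thing to verify carefully is compatibility: the \bdt-equivalence algorithm of \autoref{prop_bdteq} must access its two input trees only through the primitive operations \emph{descend to left/right child}, \emph{read current node's label}, and \emph{test whether current node is a leaf}, so that the pointers it manipulates can be realized as pointers into the prooftrees of $\pi$ and $\nu$. Inspecting the proof of \autoref{prop_bdteq} this is exactly the access pattern used (the procedure keeps four pointers, each a position in one of the trees), so the composition goes through without further work, and deciding \lamlpeqprob is in \Logspace.
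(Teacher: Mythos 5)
Your proof is correct and follows essentially the same route as the paper: iterate over all pairs of atom occurrences in the shared conclusion, use \autoref{prop_logred} to produce the two associated \bdt and \autoref{prop_bdteq} to test their equivalence, invoking closure of \Logspace under composition. The paper states the composition step in one line where you spell out the on-the-fly recomputation of tree nodes, but this is just the standard justification of that same closure property, not a different argument.
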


\begin{proof}
	Combine \autoref{prop_logred} and \autoref{prop_bdteq}, relying on the fact that \Logspace algorithms can be composed without
	stepping out of \Logspace: given two \lamlp proofs $\pi,\mu$ with the same conclusion $\Gamma\vdash A$,
	go through all the pairs of atoms $[\alpha,\beta]$ of $\Gamma\vdash A$, compute the two associated \bdt
	$\bslicing\pi[\alpha,\beta],\bslicing\mu[\alpha,\beta]$ and test them for equivalence.
\end{proof}

To prove the hardness direction of the completeness result, we rely on a standard \Logspace-complete problem
and reduce it to \lamlpeqprob.

\newcommand{\ord}{\problem{ORD}\xspace}

\begin{defi}[\ord]\label{def_ord}
\emph{Order between vertices} (\ord) is the following decision problem:
%\shrinkspace
	\begin{center}
	{\it\enquote{%
	Given a directed graph $G=(V,E)$ that is a line%
	\footnote{We use the standard definition of graph as a pair $(V,E)$ of sets of vertices and edges (oriented
	couples of vertices $x\rightarrow y$).
	A graph is a \emph{line} if it is connected and all the vertices have in-degree and out-degree $1$, except the
	\emph{begin} vertex which has in-degree $0$ and out-degree $1$ and the \emph{exit} vertex
	which has in-degree $1$ and out-degree $0$. A line induces a total order on vertices through its transitive
	closure}
	and two vertices $f,s\in V$\\
	do we have 
	$f<s$ in the total order induced by $G$?}}
	\end{center}
\end{defi}

\begin{thm} \ord is \Logspace-complete \cite{etessami97}.
\end{thm}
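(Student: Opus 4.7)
The plan is to prove the two inclusions separately. The easy direction---membership in \Logspace---is immediate from the line structure of the graph. Given a line $G=(V,E)$ and vertices $f,s$, I would start at $f$ and repeatedly follow the unique outgoing edge, checking at each step whether the current vertex is $s$ and halting either when $s$ is hit (output yes) or when the exit vertex, i.e.\ the unique vertex of out-degree $0$, is reached (output no). Only two logarithmic-sized pointers are required: one for the current vertex, and a counter bounded by $|V|$ to guarantee termination on malformed inputs.

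For the \Logspace-hardness direction, the plan is a generic reduction from an arbitrary language $L\in\Logspace$, decided by a deterministic Turing machine $M$ running in space $O(\log n)$. Without loss of generality, $M$ always halts and has a unique accepting configuration $q_{\mathsf{acc}}$. On input $w$ of length $n$, configurations of $M$ have size $O(\log n)$ and are therefore polynomially many. By determinism, the sequence of reachable configurations from the initial configuration $c_0$ forms a single chain $c_0 \to c_1 \to \dots \to c_t$, which is exactly a line graph. The reduction outputs this chain with $f = c_0$ and $s = q_{\mathsf{acc}}$; in the rejecting case it outputs a trivial two-vertex line placing $s$ before $f$. Correctness is then direct: $f < s$ in the resulting line if and only if $M$ accepts $w$.

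The main obstacle is verifying that this reduction is itself computable in \Logspace, since the chain may have polynomial length. The trick is to avoid storing it on the work tape: I would enumerate candidate configurations lexicographically and, for each one, test reachability from $c_0$ by re-simulating $M$ step by step, keeping only the current configuration (size $O(\log n)$) in memory. Whenever a reachable configuration $c$ is encountered, I emit it as a vertex and the edge $c \to \delta(c)$ on the write-only output tape. All control pointers involved fit in logarithmic space, so the reduction is indeed a logspace reduction.

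This result is due to Etessami~\cite{etessami97}, and several variants (order between bits, reachability in forests) yield essentially equivalent reductions.
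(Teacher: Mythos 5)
The membership half of your argument is fine: following the unique out-edge from $f$ with a step counter decides \ord in logarithmic space. The hardness half, however, misses the actual content of the theorem. You prove hardness under \emph{\Logspace reductions}, and under that notion the statement is vacuous: any nontrivial problem in \Logspace is \Logspace-complete under \Logspace reductions, because the reduction may simply decide the instance itself and output a fixed yes- or no-instance (your own construction does exactly this when it branches on ``the rejecting case''). The paper is explicit about this trap in the remark immediately following the theorem: completeness results at this level must use a strictly weaker reduction, and the paper works with uniform \aco reductions (Etessami's original result is stated for first-order, indeed quantifier-free, reductions). The theorem is then used by composing an \aco reduction from \ord to \lamlpeqprob, so hardness of \ord under \Logspace reductions would give nothing.

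Your specific construction cannot be repaired within \aco. While the full configuration graph of a logspace machine (every configuration $c$ with an edge to $\delta(c)$) \emph{is} locally computable, it is not a line: it is a functional graph in which almost all vertices are unreachable from the initial configuration and in-degrees are unbounded. To extract the single chain $c_0\to c_1\to\dots\to c_t$ you propose to test, for each candidate configuration, reachability from $c_0$ by re-simulating $M$ --- but that test is itself a complete \Logspace computation, so the pruning step is exactly as hard as the problem you are reducing from and is not an \aco (or first-order) operation. This is why the paper does not attempt a proof and instead cites Etessami, whose argument establishes hardness under a sub-\Logspace reduction by a genuinely different route.
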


\begin{rem}
	Since we want to prove \Logspace-completeness results, the notion of reduction in what follows needs to be
	in a smaller complexity class than \Logspace itself (indeed \emph{any} problem in \Logspace is complete under \Logspace reductions).

One among many standard notions is (uniform) \aco reduction~\cite{csv84}, 
formally defined in terms of 
uniform circuits of fixed depth and unbounded fan-in. We will not be getting into the details
about this complexity class and, as we will consider only graph
transformations, % on graphs of bounded in/out-degree,
we will rely on the following intuitive
principle: if a graph transformation locally replaces each vertex by a bounded number of vertices
and the replacement depends \emph{only} on the vertex considered and eventually its direct neighbors,
then the transformation is in \aco.
Typical examples of such a transformation are certain simple cases of so-called \enquote{gadget} reductions used in complexity theory
to prove hardness results.
\end{rem}\newpage

\begin{lemma}\label{lem_ord}
	\ord reduces to \lamlpeqprob in \aco.
\end{lemma}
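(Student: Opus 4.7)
The plan is to build, from any instance $(G,f,s)$ of \ord with $G=(V,E)$ a directed line, a pair of \lamlp proofs $\pi$ and $\nu$ with a common conclusion sequent such that $\pi \malleq \nu$ iff $f<s$ in the total order induced by $G$. The construction will be carried out in a purely gadget-based fashion: every vertex $v\in V$ contributes a fixed-size fragment of the conclusion (in particular, a Boolean label $x_v$ attached to one $\dplus{x_v}$ rule) together with a small block of proof, every edge $u\to v$ prescribes how the blocks for $u$ and $v$ are to be glued inside the proof tree, and the two marked vertices $f,s$ receive a mild variant of the generic vertex-gadget. Since each piece of the output depends only on one vertex or one edge together with its direct neighbors in $G$, and the pieces are stitched together into a tree-shaped proof via a fixed pattern, the whole map is realized by a constant-depth unbounded fan-in circuit, hence lies in \aco.

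To settle correctness I will pass through the characterization of \lamlp proof equivalence by \bdt-slicing equivalence, invoking \autoref{prop_logred}. The proofs $\pi,\nu$ will be tailored so that $\bslicing\pi$ and $\bslicing\nu$ coincide on every unordered atom pair but one distinguished pair $[\alpha,\beta]$, and so that the two \bdt $\bslicing\pi[\alpha,\beta]$ and $\bslicing\nu[\alpha,\beta]$ each involve all the variables $x_v$ for $v\in V$. One of these \bdt encodes, through its leaf-labels, the Boolean function that is $\one$ exactly on valuations corresponding to selecting vertices in the order dictated by following the edges of $G$ from the begin vertex; the other encodes the function that is $\one$ exactly on valuations selecting vertices in an order that explicitly places $x_f$ before $x_s$. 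These two Boolean functions coincide, as witnessed by \autoref{def_bdteq}, precisely when $f<s$ in the order induced by $G$.

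The main obstacle is reconciling the locality demanded by an \aco reduction with the essentially global nature of the ordering predicate: the reducing circuit cannot by itself compute reachability in $G$, so the proofs it produces must carry the ordering information in a distributed, syntactic form. The resolution is to push all of the ordering work into the evaluation of the \bdt slicings at equivalence-checking time: at reduction time one merely lays down the local gadgets along the edges of $G$, while the actual traversal of the line is performed later by the \Logspace procedure of \autoref{prop_bdteq} when comparing the two \bdt at $[\alpha,\beta]$. Soundness is then obtained by exhibiting, under the hypothesis $f<s$, an explicit Boolean-function equality between the two distinguished \bdt; completeness, under the hypothesis $f\not<s$, by exhibiting a valuation of the $x_v$ on which the two \bdt disagree.
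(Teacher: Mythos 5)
There is a genuine gap here, and it is precisely the one that \autoref{rem_mistake} warns about: routing the hardness proof through $\dplus{x_v}$ rules and \bdt slicings is the strategy of the erroneous conference version, and it does not appear to be implementable in \aco. The concrete obstruction is this. If every vertex $v$ of the line contributes a $\dplus{x_v}$ rule, then the conclusion sequent of the gadget sitting at $v$ must record exactly which of the connectives $\labplus{x_w}$ have already been introduced below it in the proof tree --- in the notation of \autoref{def_enc}, that sequent has the shape $\cont{n}{I}\vdash\boolf$ where $I$ is the set of variables occurring between $v$ and the root. That set is precisely the set of vertices lying between $v$ and one end of the line, \ie it is the ordering information that \ord asks about; a constant-depth circuit able to write down these sequents could already solve the problem, so the map is not an \aco reduction. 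A second, related problem is that $\dplus{x_v}$ is a \emph{binary} rule: each vertex gadget needs a second premise (a closed subproof playing the role of a \bdt leaf), and that subproof's shape again depends on the global set $I$ rather than on $v$ and its direct neighbours. Your paragraph acknowledging the tension between locality and the global ordering predicate does not resolve it: \enquote{pushing the work to evaluation time} is not available, because the syntax of a \lamlp proof --- every intermediate sequent --- must already be globally consistent at the time the reduction writes it down.

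The paper sidesteps all of this by not using the additive rules at all in the hardness argument. It fixes a single proof $\pi_0$ of the constant sequent $a, a\linear a, a\linear a, a\linear a\vdash a$ and replaces every vertex by a \emph{unary} exchange gadget with that same fixed conclusion: $\ex{2}{3}$ for $f$, $\ex{3}{4}$ for $s$, and a vacuous pair of $\ex{1}{2}$ rules elsewhere. Since the transpositions $(2\,3)$ and $(3\,4)$ do not commute, the resulting proof is equivalent to the reference proof ($\pi_0$ followed by $\ex{2}{3}$ then $\ex{3}{4}$) if and only if $f<s$; and since every gadget has the same conclusion independently of its position in the line, \aco locality is immediate. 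If you want to salvage a \bdt-based route you would need to explain how to produce well-formed intermediate sequents without computing path information, which is exactly what the author reports being unable to do.
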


\begin{proof}
	We are going to build a local graph transformation between the two problems.
	
	First, we assume \wloss that 
	the begin $b$
%	and 
%	the exit $e$ 
	vertex 
	of the line $G$ 
%	are 
	is
	different from $f$ and $s$.
Consider the following proof (where the indexes are just used to identify occurences of the same atom $a$) 
that we call $\pi_{0}$:
\begin{center}
	\begin{prooftree}
		\Hypo{a_1\vdash a_1}
		\Hypo{a_2\vdash a_2}
		\Infer{2}[$\dlinear$]{a_1,a_1\linear a_2\vdash a_2}
		\Hypo{a_3\vdash a_3}
		\Infer{2}[$\dlinear$]{a_1,a_1\linear a_2,a_2\linear a_3\vdash a_3}
		\Hypo{a_4\vdash a_4}
		\Infer{2}[$\dlinear$]{a_1,a_1\linear a_2,a_2\linear a_3,a_3\linear a_4\vdash a_4}
	\end{prooftree}
\end{center}
Then encoding relies on the exchange rule: the begin vertex $b$ is replaced by the
proof $\pi_{0}$, the $f$ vertex is replaced by the $\ex 23$ rule with conclusion $a,a\linear a,a\linear a,a\linear a\vdash a$,
the $s$ vertex is replaced by the $\ex 34$ rule $a,a\linear a,a\linear a,a\linear a\vdash a$ and all the other
vertices are replaced by a sequence of two identical $\ex 12$ rules (so the end result is doing nothing) again with
conclusion $a,a\linear a,a\linear a,a\linear a\vdash a$. Edges are kept as they were.
Note that the conclusions of each \enquote{gadget} does
not depend on the rest of the tree (as is the case in general) which would be problematic with respect to \aco
complexity.

Because the permutations corresponding to $\ex 23$ and $\ex 34$ do not commute, asking if $f$ comes before $s$
amounts to asking if the resulting proof is equivalent to
\begin{center}
	\begin{prooftree}
		\Theproof{\pi_0}{a_1,a_1\linear a_2,a_2\linear a_3,a_3\linear a_4\vdash a_4}
		\Infer{1}[$\ex 23$]{a_1,a_2\linear a_3,a_1\linear a_2,a_3\linear a_4\vdash a_4}
		\Infer{1}[$\ex 34$]{a_1,a_2\linear a_3,a_3\linear a_4,a_1\linear a_2\vdash a_4}
	\end{prooftree}\vspace{-36 pt}
\end{center}
\end{proof}\vspace{12 pt}

\noindent So in the end, we get:

\begin{thm}[\Logspace-completeness]
	The problem \lamlpeqprob is
	\Logspace-complete.
\end{thm}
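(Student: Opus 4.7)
The proof is essentially a packaging of the preceding results, so the plan is to assemble them cleanly rather than introduce any new machinery. For the membership direction, I would simply invoke the immediately preceding proposition, which already establishes \lamlpeqprob $\in$ \Logspace by composing the \Logspace computation of \bdt slicings from \autoref{prop_logred} with the \Logspace decision procedure for \bdt equivalence from \autoref{prop_bdteq}. The key point to note, already flagged in that proof, is that \Logspace algorithms compose without stepping outside \Logspace, so iterating over all pairs $[\alpha,\beta]$ of atoms of the common conclusion and invoking the two subroutines in sequence remains logarithmic in space.

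For \Logspace-hardness, I would combine \autoref{lem_ord}, which furnishes an \aco reduction from \ord to \lamlpeqprob, with the cited \Logspace-completeness of \ord. Since uniform \aco is contained in \Logspace, the \aco reduction is \emph{a fortiori} a \Logspace reduction; then by transitivity of \Logspace reductions (which compose within \Logspace), any problem reducible to \ord in \Logspace is reducible to \lamlpeqprob in \Logspace, giving \Logspace-hardness under \Logspace reductions.

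There is no real obstacle at this stage: the substantive work — namely the construction of \bdt slicings yielding the \Logspace upper bound, and the gadget-style local graph transformation yielding \Logspace-hardness via \ord — has already been carried out. The theorem is thus a direct corollary, and the proof amounts to little more than citing the two ingredients and noting the \aco $\subseteq$ \Logspace inclusion used to merge the reduction chain.
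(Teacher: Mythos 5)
Your assembly of the two directions matches the paper's: membership comes from composing \autoref{prop_logred} with \autoref{prop_bdteq}, and hardness comes from \autoref{lem_ord} together with the \Logspace-completeness of \ord. The membership half is fine as stated.

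The hardness half, however, contains a genuine (if easily repaired) error of packaging. You downgrade the \aco reduction of \autoref{lem_ord} to a \Logspace reduction and conclude \enquote{\Logspace-hardness under \Logspace reductions.} That statement is vacuous: \emph{every} nontrivial problem in \Logspace is \Logspace-hard under \Logspace reductions (the reduction can simply solve the instance and output a fixed yes- or no-instance), so it carries no information and cannot be combined with the membership result to yield a meaningful completeness claim. This is precisely the pitfall the paper flags in the remark preceding \autoref{lem_ord}: for a \Logspace-completeness result the notion of reduction must live in a class strictly smaller than \Logspace. The correct final step is to stay at the \aco level throughout: \ord is \Logspace-complete under \aco (indeed first-order) reductions by \cite{etessami97}, \aco reductions compose, and \autoref{lem_ord} provides an \aco reduction from \ord to \lamlpeqprob; hence \lamlpeqprob is \Logspace-hard under \aco reductions, which together with membership gives completeness. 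The substantive content of your proof is unaffected, but as written the hardness conclusion does not establish what the theorem asserts.
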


\begin{rem}
	Since the reduction we use for the hardness part uses only the rules $\dlinear$ and exchange, any
	subsystem of intuitionistic linear logic containing these rules will be have a \Logspace-hard equivalence problem.
	If they are moreover subsystems of \lamlp, the problem will be \Logspace-complete. For instance intuitionistic
	multiplicative linear logic (without units) has a \Logspace-complete equivalence problem.
\end{rem}

\begin{rem}\label{rem_mistake}
	The choice of including an explicit exchange rule needs to be commented here. In the the conference version of
	this article \cite{bagnol15} we tried to work \modulo exchange to simplify proofs and definitions.
	But when dealing with such low complexities as \Logspace and \aco reductions, this is not something we can
	afford: as we just saw, the exchange rule entails \Logspace-hardness of equivalence, so that working \modulo this rule
	amounts to work \modulo a problem which is hard for the complexity class we are looking at.
	
	This is at the root of a mistake in the conference version of this work: deprived of explicit exchange rule,
	we were forced to try to prove hardness \via a reduction of \bdt equivalence, which does not seem to
	be doable in \aco after all.
	More generally, this shows that when dealing with such low complexities the details of how proofs are implemented
	can be fully relevant.
\end{rem}

\subsection{Classical case} 
Exploring further the relation between \bdt and \lamlp proofs, let us
have a look at extending the above constructions and results to the classical case.

In \mallm we have one sided sequents $\vdash\Gamma$ (we even drop the $\vdash$ which no longer serve any purpose)
of formulas built from atoms and \emph{duals} of atoms $\nalpha, \nbeta, \ngamma, \dots$ and the connectives
$\oplus,\with,\otimes,\parr$. The $\with$ connectives are now holding the labels. Here are the rules of
\mallm:

\begin{center}
	~
	\hfill
	\begin{prooftree}
		\Hypo{\nalpha,\alpha}
	\end{prooftree}
	\hfill
\begin{prooftree}
		\Hypo{\Gamma,A,B}
		\Infer{1}[$\rparr$]{\Gamma,A\parr B}
	\end{prooftree}
	\hfill
	\begin{prooftree}
		\Hypo{\Gamma,A}
		\Hypo{\Delta,B}
		\Infer{2}[$\otimes$]{\Gamma,\Delta,A\otimes B}
	\end{prooftree}
	\begin{prooftree}
		\Hypo{\Gamma}
		\Infer{1}[$\ex ij$]{\Gamma'}
	\end{prooftree}
	\hfill
	\begin{prooftree}
		\Hypo{\Gamma,A}
		\Infer{1}[$\lplus$]{\Gamma,A\plus B}
	\end{prooftree}
	\hfill
	\begin{prooftree}
		\Hypo{\Gamma,B}
		\Infer{1}[$\rplus$]{\Gamma,A\plus B}
	\end{prooftree}
	\hfill
	\begin{prooftree}
		\Hypo{\Gamma,A}
		\Hypo{\Gamma,B}
		\Infer{2}[$\rlabwith{x}$]{\Gamma,A\labwith{x} B}
	\end{prooftree}
	\hfill
	~
\end{center}

Notice how close this is to what we had before. We are indeed looking at nothing more than a symmetrized version of \lamlp,
with a one-to-one correspondence between rules: $\parr$ matches $\linear$, $\otimes$ matches $\dlinear$ \etc
this makes the extension of previous result to \mallm extremely straightforward.

\Logspace-hardness is immediate since \lamlp is a subsystem of \mallm.

Moreover, the slicing and proof equivalence result of \autoref{sec_slicing} were originally formulated for \mallm~\cite{hvg05}
and we did nothing but adapt them to \lamlp. The notion of \bdt slicing never relies on the fact that we have
a separation $\Gamma\vdash A$ so extending this to \mallm is just a matter of giving the same interpretation to corresponding
rules. It follows that we can still reduce \mallmeqprob to \bdteqprob in logarithmic space, and therefore that
\mallmeqprob is in \Logspace.

\begin{thm}
	The equivalence problem \mallmeqprob is \Logspace-complete.
\end{thm}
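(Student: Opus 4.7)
The plan is to argue both directions by leveraging the tight correspondence between \lamlp and \mallm already described, rather than redoing the combinatorial work from scratch. First I would handle hardness: intuitionistic \mallm is a sub-syntax of \mallm (one can view a one-sided sequent $\vdash \dual\Gamma, A$ as a presentation of $\Gamma \vdash A$ and each intuitionistic rule matches exactly one classical rule), and in particular the hardness reduction from \ord built in \autoref{lem_ord} uses only $\dlinear$ and the exchange rule, which correspond to $\otimes$ and $\ex{i}{j}$ on the classical side. Embedding the construction of \autoref{lem_ord} into \mallm then gives an \aco reduction from \ord to \mallmeqprob, so \mallmeqprob is \Logspace-hard.

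For the upper bound I would port the machinery of Section~3 to \mallm. Define a \bdt slicing of a one-sided sequent $\vdash \Gamma$ exactly as in \autoref{def_boolslicing}, associating a \bdt $\bslicing{\pi}[\alpha,\beta]$ to each unordered pair of dual atom occurrences in $\Gamma$, where the rules are interpreted symmetrically: axiom yields $\one$, $\rparr$ is transparent (like $\linear$), $\otimes$ splits like $\dlinear$ into the relevant subproof or $\zero$, $\lplus/\rplus$ zero out the non-selected side, $\rlabwith{x}$ at label $x$ produces $\itef{x}{\cdot}{\cdot}$ exactly as $\dplus{x}$ does, and the exchange rule just reindexes. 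Because the inductive definition depends only on the connective being introduced and never on the left/right separator $\vdash$, the translation is well-defined and can be computed pair-by-pair in logarithmic space, giving the analogue of \autoref{prop_logred}.

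Next I would verify the analogue of the proof-equivalence theorem: $\pi \malleq \nu$ iff $\bslicing{\pi} \sliceeq \bslicing{\nu}$. This follows from the Hughes--van Glabbeek slicing canonicity theorem (\autoref{th_sliceeq}) as originally formulated for \mallm, combined with the same two lemmas used in the proof of that theorem, which are rule-by-rule inductions and go through unchanged under the symmetric reading. The step to check is the $\otimes$ case, since it is the one that multiplies slices, but it mirrors the $\dlinear$ case verbatim. With the characterization in hand, composing the \Logspace computation of \bslicing{\pi}[\alpha,\beta] with the \Logspace \bdt equivalence test of \autoref{prop_bdteq}, iterated over all pairs of atom occurrences, shows \mallmeqprob is in \Logspace.

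The main obstacle I anticipate is bookkeeping rather than mathematical content: one must be careful that the \aco reduction for hardness really lands inside the classical fragment (e.g., that the gadget proof $\pi_0$ and the exchange rules have straightforward $\otimes$/$\ex{i}{j}$ counterparts producing the same non-commuting permutations), and one must check that the extension of \bdt slicings to \mallm behaves consistently on dualized atoms—so that pairs $[\alpha,\nalpha]$ are the right objects, and the two auxiliary lemmas about $v(\bslicing{\pi})$ still produce a witnessing pair separating two distinct valuations. Once these routine adaptations are done, both bounds are immediate, yielding the claimed \Logspace-completeness.
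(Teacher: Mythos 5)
Your proposal is correct and follows essentially the same route as the paper: \Logspace-hardness by observing that \lamlp (and in particular the \ord gadget of \autoref{lem_ord}, which uses only $\dlinear$ and exchange) embeds into \mallm, and membership by extending \bdt slicings symmetrically to one-sided sequents and invoking the Hughes--van Glabbeek slicing canonicity theorem, which was originally stated for \mallm. The paper treats these adaptations as immediate where you spell out the bookkeeping, but the content is the same.
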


\section{Reductions} \label{sec_reduction}
With the development of previous sections, we were able to show that \lamlpeqprob is \Logspace-complete.
%In this section we turn to the inverse problem, that is can we find a lower bound for this problem?
%In this section we set up a number of reductions between problems.
This relied on the use of \bdt, which equivalence problem was show to be in \Logspace. In this section
we complete the picture by first showing that \bdt equivalence is \Logspace-hard and setting up a direct
reduction from \bdt to \lamlp proofs.

We start by showing that the \bdt equivalence problem is \Logspace-hard, by a similar argument to that of 
the proof of \autoref{lem_ord}.
Then we will show we have a converse transformation to \bdt slicing: we can encode any \bdt into
a \lamlp proof in logarithmic space, transporting \bdteqprob to \lamlpeqprob. This translation is more
expansive than the one we would get from completeness results, but it preserves the tree structure it manipulates.

%But this will not be enough for a completeness result as we need a finer type of reduction:
%we introduce a specifically resticted notion of \bdt with a prescribed variable order that can
%be encoded to \lamlp proofs in \aco, then show that a classic \Logspace-complete problem can be reduced to
%equivalence of these specific \bdt.

%Then we will proceed to show that the \bdt equivalence problem is \Logspace-complete, which implies that the
%equivalence of \bdt slicings also is.

%In \autoref{sec_complete} we will put all these pieces together into our main theorem.

%	\subsection{Ordered BDT to \lamlp proofs (in \aco)}
%	\input{redobdt}
	\subsection{\bdteqprob is \Logspace-complete}
\label{sec_redlin}
Let us now show that the equivalence problem of \bdt is \Logspace-hard (and hence complete),
again by reducing \ord (\autoref{def_ord}) to it. 
%This implies in turn that equivalence of
%\bdt slicings is \Logspace-complete. We also discuss the mistake appearing in the conference version of this work
%\cite{bagnol15}.

%\subsection{\aco reductions}

%To show that a problem is complete for some complexity class $\compclass C$, one needs to specify
%the notion of reduction functions considered, and of course this needs to be a class of
%functions supposed to be smaller than $\compclass C$ itself (indeed any problem in $\compclass C$ is complete under
%$\compclass C$ reductions).

%A standard notion of reduction for the class Logspace is (uniform) \aco reduction~\cite{csv84},
%formally defined in terms of uniform circuits of fixed depth and unbounded fan-in. We will
%not be getting in the details about this complexity class and, as we will consider only graph
%transformations, we will rely on the following intuitive principle: if a graph transformation
%locally replaces each vertex by a bounded number of vertices and the replacement depends
%only on the vertex considered and eventually its direct neighbors, then the transformation
%is in \aco. Typical examples of such a transformation are certain simple cases of so-called
%\enquote{gadget} reductions used in complexity theory to prove hardness results.

\begin{lemma}\label{lem_ord2}
	\ord reduces to \bdteqprob in \aco.
\end{lemma}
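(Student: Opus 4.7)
The plan is to mimic the reduction of Lemma~\ref{lem_ord} by designing a local BDT-gadget for each vertex of the line graph $G$ and composing them into two BDTs $\phi_G$ and $\psi_G$ whose equivalence captures the relation $f<s$. For each vertex $v$ of $G$, introduce a fresh Boolean variable $x_v$ and output a BDT node branching on $x_v$, with left-child pointer following $v$'s unique outgoing edge in the line. The gadgets for $f$ and $s$ are chosen so that they do not commute as operations on BDTs: using, for instance, the contexts $C_f : T \mapsto \itef{x}{T}{\zero}$ and $C_s : T \mapsto \itef{y}{T}{\one}$, the compositions $C_s(C_f(\one))$ and $C_f(C_s(\one))$ compute different Boolean functions (respectively $\lnot x$ and $y \vee \lnot x$). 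Intermediate vertices receive a neutral gadget of the form $T \mapsto \itef{x_v}{T}{\one}$, whose contribution vanishes when its continuation evaluates to a constant leaf.

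With this setup, $\phi_G$ is the chain-BDT obtained by composing gadgets along the actual line order of $G$, and $\psi_G$ is a reference BDT constructed in parallel fashion but encoding the canonical ``$f$-before-$s$'' composition. By the non-commutativity of $C_f$ and $C_s$, the function computed by $\phi_G$ matches that of $\psi_G$ precisely when $f$ is reached before $s$ along the line, giving $\phi_G \sliceeq \psi_G$ iff $f < s$ in $G$.

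Each gadget is specified using only local information, namely the type of the vertex (begin, exit, $f$, $s$, or other) together with its immediate successor via the unique outgoing edge, so the whole transformation is computable in \aco by a simple ``gadget'' reduction as described in the remark preceding the lemma. The main obstacle is choosing the intermediate gadgets so that the variables of the other vertices do not pollute the equivalence test: this is handled by arranging that the contributions of these variables in $\phi_G$ and $\psi_G$ coincide, so that the distinguishing behavior rests entirely on the relative ordering of $f$ and $s$. Once this is in place, the reduction is immediate and completes, together with \autoref{prop_bdteq}, the \Logspace-completeness of \bdteqprob.
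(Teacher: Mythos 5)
Your high-level strategy (a local gadget reduction from \ord with non-commuting operations placed at $f$ and $s$) matches the paper's, but the construction as proposed does not go through, and the gap sits exactly at the point you flag as ``the main obstacle.'' With your gadgets, a neutral vertex $v$ contributes $x_v\lor(\cdot)$ and $C_f$ contributes $\lnot x_f\land(\cdot)$, so the chain computes something of the form $\bigvee_{v<f} x_v \,\lor\, \bigl(\lnot x_f \land (\bigvee_{v>f} x_v \lor L)\bigr)$. This function depends not only on whether $f$ precedes $s$ but on the entire partition of the intermediate vertices into those before and those after $f$: a disjunction gadget placed outside the scope of the $\lnot x_f\land(\cdot)$ context computes a genuinely different function from the same gadget placed inside it. Hence the contributions of the other variables do \emph{not} coincide between $\phi_G$ and any reference $\psi_G$ unless $\psi_G$ already encodes which vertices precede $f$ --- and computing that partition is itself an instance of \ord for every vertex, so it cannot be hardwired into an \aco reduction. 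The claim that the neutral gadget's contribution ``vanishes when its continuation evaluates to a constant leaf'' is also beside the point, since the continuation of an intermediate vertex is not a constant leaf in general. In short, no construction of $\psi_G$ is actually given, and with these gadgets none is possible.

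The paper circumvents precisely this difficulty with an idea your proposal is missing: it takes \emph{three} copies of the line carrying the \emph{same} variable labels, rewires the outgoing edges of the $f_i$ across copies by a $3$-cycle and those of the $s_i$ by a transposition, completes every unary vertex with a second branch to $\one$, labels the three exits $\one,\zero,\zero$, and adds two selector variables on top of the three begin vertices. Because every root-to-leaf chain then traverses one copy of every original vertex, each chain computes $\bigvee_v x_v \lor e_{\sigma(i)}$ where $\sigma$ is the composite permutation of the copies; the intermediate vertices therefore contribute identically whatever their order, and equivalence with the un-rewired tree reduces to whether $\sigma$ fixes the copy whose exit is labelled $\one$, which holds for exactly one of the two possible relative orders of $f$ and $s$. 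To salvage a single-copy approach you would need gadgets whose composite is provably independent of the positions of all vertices other than $f$ and $s$; the permutation-of-copies trick is what achieves this, and your proposal needs it (or an equivalent device) to be correct.
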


\begin{proof}
	(very similar to the proof of \autoref{lem_ord})
	First, we assume \wloss that the begin $b$ and the exit $e$ vertices of the line $G$ are different from $f$ and $s$. We write
	$f^+$ and $s^+$ the vertices immediately after $f$ and $s$ in $G$.
	
	Then, we perform a first transformation by replacing the graph with three copies of itself
	(this can be done by locally scanning the graph and create labeled copies of the vertices and edges).
	We write $x_i$
	to refer to the copy of the vertex $x$ in the graph $i$.
	The second transformation is a rewiring of the graph as follows: erase the edges going out of the $f_i$
	and $s_i$ and replace them as pictured in the two first subgraphs:
%	$$f_1 \rightarrow f_2^+\qquad f_2 \rightarrow f_3^+\qquad f_3 \rightarrow f_1^+$$
%	$$s_1 \rightarrow s_2^+\qquad s_2 \rightarrow s_1^+\qquad s_3 \rightarrow s_3^+$$
%\newcommand{\compress}{\vspace{-2pt}}
%\compress
\begin{center}
	\begin{tikzpicture}
\tikzstyle{localstyle}=[draw=black,thick,->]
\tikzset{every node/.style={inner sep=1pt,font=\small}}
	\matrix(m)[row sep=15pt,column sep=10pt]{%
  \node(f1){$f_1$}; & \node(f2){$f_2$}; & \node(f3){$f_3$}; 
&&&\node(s1){$s_1$}; & \node(s2){$s_2$}; & \node(s3){$s_3$}; \\
  \node(f1+){$f_1^+$}; & \node(f2+){$f_2^+$}; & \node(f3+){$f_3^+$};
&&&\node(s1+){$\,s_1^+$}; & \node(s2+){$s_2^+$}; & \node(s3+){$s_3^+$}; \\
};
\path [localstyle] (f1) to (f3+.110);
\path [localstyle] (f2) to (f1+);
\path [localstyle] (f3) to (f2+);

\path [localstyle] (s3) to (s2+.80);
\path [localstyle] (s2) to (s3+.110);
\path [localstyle] (s1) to (s1+);
	\end{tikzpicture}

\qquad\qquad\qquad\qquad
	\begin{tikzpicture}
\tikzstyle{localstyle}=[draw=black,thick,->]
\tikzset{every node/.style={inner sep=0pt,font=\small,text depth=1pt}}
	\matrix(m)[row sep=8pt,column sep=6pt]{%
&&\node(x){$x$};\\
&&&\node[text height=5pt,text depth=2pt](y){$y$}; \\
\node(b1){$b_3$};&& \node(b2){$b_2$};&& \node[text height=7pt](b3){$b_1$}; \\};
\path [localstyle] (x.-50) to (y);
\path [localstyle] (x.-130) to (b1);
\path [localstyle] (y) to (b2);
\path [localstyle] (y) to (b3);
	\end{tikzpicture}

\end{center}
Let us call $G_r$ the rewired graph and $G_n$ the non-rewired graph. To each of them we add two binary vertices
$x$ and $y$ connected to the begin vertices $b_i$ as pictured in the third graph above.

Then we can produce two corresponding \bdt $\phi_r$ and $\phi_n$ by replacing the exit vertices $e_1$, $e_2$, $e_3$ by $\one$, $\zero$, $\zero$ 
respectively, and each complete each unary (those with an out-degree $1$) vertex with a $\rightarrow\!\one$ second
branch, making each vertex binary.

It is then easy to see that if $f<s$ in the order induced by $G$ if and only if $\phi_r$ and $\phi_n$ are equivalent.

Let us illustrate graphically what happens in the case where indeed $f<s$: we draw the resulting \bdt as a labeled
graph (with the convention that the we do not picture the extra $\rightarrow\!\one$ on unary vertices)
\begin{center}
\begin{tikzpicture}
	\tikzstyle{localstyle}=[draw=black,thick,->]
\tikzset{every node/.style={inner sep=0.5pt,font=\small,text depth=0pt,text height=4pt}}
	\matrix(m)[row sep=7pt,column sep=15pt]{%
&&\node(b1){$b$}; & \node(d1){$\cdots$}; & \node(f1){$f$};
&&\node(f1+){$f^+$}; & \node(dd1){$\cdots$}; & \node(s1){$s$};
&&\node(s1+){$s^+$}; & \node(ddd1){$\cdots$}; & \node(e1){$\one$}; \\
&\node(y1){$y$};\\
\node(x){$x$};&&\node(b2){$b$}; & \node(d2){$\cdots$}; & \node(f2){$f$};
&&\node(f2+){$f^+$}; & \node(dd2){$\cdots$}; & \node(s2){$s$};
&&\node(s2+){$s^+$}; & \node(ddd2){$\cdots$}; & \node(e2){$\zero$}; \\[1mm]
&\\
&&\node(b3){$b$}; & \node(d3){$\cdots$}; & \node(f3){$f$};
&&\node(f3+){$f^+$}; & \node(dd3){$\cdots$}; & \node(s3){$s$};
&&\node(s3+){$s^+$}; & \node(ddd3){$\cdots$}; & \node(e3){$\zero$}; \\
};
%% fleches ... 1ere colonne
\path [localstyle] (b1) to (d1);
\path [localstyle] (d1) to (f1);
\path [localstyle] (b2) to (d2);
\path [localstyle] (d2) to (f2);
\path [localstyle] (b3) to (d3);
\path [localstyle] (d3) to (f3);
%fleches f f+
\path [localstyle] (f1) to (f2+.west);
\path [localstyle] (f2) to (f3+.west);
\path [localstyle] (f3.north east) to (f1+.south west);
%% fleches ... 2e colonne
\path [localstyle] (f1+) to (dd1);
\path [localstyle] (dd1) to (s1);
\path [localstyle] (f2+) to (dd2);
\path [localstyle] (dd2) to (s2);
\path [localstyle] (f3+) to (dd3);
\path [localstyle] (dd3) to (s3);
%fleches s s+
\path [localstyle] (s1) to (s2+.west);
\path [localstyle] (s2) to (s1+.west);
\path [localstyle] (s3) to (s3+.west);
%% fleches ... 3e colonne
\path [localstyle] (s1+) to (ddd1);
\path [localstyle] (ddd1) to (e1);
\path [localstyle] (s2+) to (ddd2);
\path [localstyle] (ddd2) to (e2);
\path [localstyle] (s3+) to (ddd3);
\path [localstyle] (ddd3) to (e3);
% fleches xy
\path [localstyle] (x) to (y1);
\path [localstyle] (x) to (b3);
\path [localstyle] (y1) to (b1);
\path [localstyle] (y1) to (b2);

\end{tikzpicture}

\end{center}\vspace{-22 pt}
\end{proof}\medskip

%\begin{rem}
%	The above construction relies on the fact that there are non-commuting permutations on the set of three
%	elements: in a sense we are just attributing two non-commuting $\sigma$ and $\tau$ to $f$ and $s$ and make
%	sure that the order in which they intervene affects the equivalence class of the resulting \bdt. An approach
%	quite similar in spirit with the idea of \emph{permutation branching program}~\cite{barrington89}.
%\end{rem}

%We can now extend our chain of reductions with the two new elements from this section
%\begin{center}%\footnotesize
%	\ord\ {\footnotesize(\Logspace-hard)}\quad$\rightarrow$\quad 
%	 \obddequ\quad $\rightarrow$\quad 
%	 \mallmeq\quad $\rightarrow$\quad \bddequ\ {\footnotesize($\in$\,\Logspace)}
%	 %
%%	 \footnotesize(\Logspace-hard)\hspace{6.4cm}($\in$\,\Logspace)\hspace{0.6cm}~
%\end{center}
%so in the end we get our main result:

\noindent As a consequence of the lemma and the results of \autoref{ssec_logspace} we get:

\begin{thm}%[\Logspace-completeness]
	The equivalence problems of
	\bdt (and hence of \bdt slicings) is
	\Logspace-complete.
\end{thm}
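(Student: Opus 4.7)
The plan is to assemble the result from the pieces already in hand, taking care to handle both statements in the theorem (plain \bdt and \bdt slicings). For \bdt equivalence itself, membership in \Logspace is precisely \autoref{prop_bdteq}, while \Logspace-hardness follows from \autoref{lem_ord2} together with the fact that \ord is \Logspace-complete under \aco reductions; since \aco reductions compose and are contained in \Logspace reductions, this yields \Logspace-hardness of \bdteqprob. So the first half of the theorem is essentially a one-line combination.

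For the \bdt slicing equivalence problem, I would argue membership in \Logspace as follows. Given two \bdt slicings $\slicef M$ and $\slicef N$ on the same sequent $\Gamma\vdash A$, it suffices to enumerate pairs $[\alpha,\beta]$ of atom occurrences (two logarithmic-size pointers) and, for each such pair, test whether $\slicef M[\alpha,\beta]\booleq \slicef N[\alpha,\beta]$ via the procedure of \autoref{prop_bdteq}. Since \Logspace algorithms compose without leaving \Logspace, this gives an overall \Logspace decision procedure. For \Logspace-hardness I would reduce \bdteqprob to \bdtsleqprob in \aco: any pair of \bdt $\phi,\psi$ can be viewed as a \bdt slicing on a trivial sequent containing just two atoms (say $\alpha\vdash\alpha$, or any fixed sequent with exactly one pair of atom occurrences), with the unique pair being assigned $\phi$ on one side and $\psi$ on the other. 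This is a purely local transformation of the input and clearly in \aco, so hardness transfers.

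I don't anticipate any real obstacle: all the non-trivial combinatorial content (compatible-leaves characterization, the gadget-style reduction from \ord) has already been done in \autoref{prop_bdteq} and \autoref{lem_ord2}. The only subtlety is being explicit about the \aco-to-\Logspace passage for the hardness direction and about the compositional behaviour of logarithmic-space procedures for the membership direction; both are standard. Thus the proof is essentially a short bookkeeping argument citing \autoref{prop_bdteq}, \autoref{lem_ord2}, and the \Logspace-completeness of \ord, plus the trivial embedding of a \bdt into a \bdt slicing for the second half.
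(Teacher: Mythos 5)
Your proposal is correct and matches the paper's intended argument exactly: the theorem is stated there as an immediate consequence of \autoref{prop_bdteq} (membership) together with \autoref{lem_ord2} and the \Logspace-completeness of \ord (hardness). Your explicit treatment of the \bdt slicing case (pairwise testing for membership, and the trivial one-pair embedding for hardness) just spells out what the paper leaves implicit in its parenthetical, and is sound.
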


%\subsection*{Erratum to the TLCA'15 article} In the first version of this work \cite{bagnol15}, the main result was stated
%as \enquote{\mallm proof equivalence is \Logspace-complete}. We found out later on a mistake in the proof, which used
%a simpler notion of \bdt (\emph{ordered} \bdt) in the analogue of \autoref{lem_ord}. The point is these simpler trees could be
%encoded as proofs in \aco, but they are too simple to get \Logspace-hardness.

%The same issue shows up in \autoref{sec_reduction} and prevents us to have an \aco reduction: the information about
%free variables is not local, and requires to travel in a number of vertices that cannot be bounded \emph{a priori}.

%The correct formulation of the results is therefore: \enquote{\mallmeqprob is in \Logspace} together with
%\enquote{equivalence of \bdt slicings is \Logspace-complete}.

	\subsection{From \texorpdfstring{\BDT}{BDT} to \texorpdfstring{\lamlp}{lamdaopluslollipop} proofs (in \Logspace)}\label{sec_redlog}
We know now that both \bdteqprob and \lamlpeqprob are \Logspace-complete. Therefore there are reductions
in both directions between these two problems. However, coming from completeness theory
these reductions would not preserve the structure of the trees they manipulate.

The basic idea here is to use the $\dplus x$ rule to encode the $\itef x{\cdot}{\cdot}$ vertices.
We will have a formula $\boolf$ which will receive the encoded \bdt in the \bdt slicing interpretation of the proof,
and an atomic formula $\alpha_i$ for each variable.
To be able to mix the order in which the variables appear, we will need a careful treatment of which variables have
already been tested at any point.

\begin{defi}[free variable]
	Given a \bdt $\phi$ %over the variables $\extlist{x_1}{x_n}$ 
	we call the \emph{free variables} of a vertex or leaf $v$ of $\phi$ the list
	of variables one encounters on the path from $v$ to the root of $\phi$.
\end{defi}

In other words, a variable is free at a certain point of a \bdt if it needs to be tested to reach this point from the root of the tree.
For instance, in the \bdt
\begin{center}
	\begin{tikzpicture}
\node[circle,draw](x){$x$}
  child{node[circle,draw]{$t$} child{node[circle,draw]{$z$} child{node{$\zero$}} child{node{$\zero$}}} child{node{$\zero$}}}
  child[missing]
  child{
    node[circle,draw]{$y$} child{node{$\one$}} child{node{$\zero$}}
    };
\end{tikzpicture}
\end{center}
the $\one$ leaf has free variables $x,y$ while the $z$ vertex has free variables $x,t$.
Note that at any point of the tree, the two children of an internal vertex have the same free variables.

Let us then settle a few notation to help streamline the definition of the encoding.

\begin{nota}
	We fix atomic formulas $\extlist{\alpha_1}{\alpha_n} \dots$ and $\beta$ and write 
	$\boolf= \beta\plus\beta$. In what follows, we will use $\lbeta$ and $\rbeta$ to refer
	respectively to the left and right copies of $\beta$ in $\boolf$; %and when the case arises, we will use
	and likewise $\lalpha_i$ and $\ralpha_i$ for copies of $\alpha_i$ in $\alpha_i\oplus\alpha_i$
	and we implicitly associate the boolean variable $x_i$ to $\alpha_i\oplus\alpha_i$.
	Moreover, we write $\dalpha_i$ for the occurrence of $\alpha_i$ in $\form n=\alpha_n \linear \cdots \linear\alpha_2 \linear \alpha_1 \linear \beta$%
	\footnote{We follow the usual convention of writing the arrow $\linear$ as right-associating: 
	$\alpha \linear\beta \linear\gamma=\alpha \linear(\beta \linear\gamma)$}.
	
	Given $n$ and a subset $I\subseteq\extset{x_1}{x_n}$
	%$I=\extset{i_1}{i_k}\subseteq\extset{1}{n}$ and letting $J=\extset{1}{n}\setminus I=\extset{j_1}{j_r}$
	we can then identify uniquely atom occurrences in the sequents:
	\begin{itemize}
		\item $\contf I=\set{\alpha_i}{x_i\in I}$ %\extlist{\alpha_{i_1},\alpha_{i_2}}{\alpha_{i_k}}$
		\item $\contb I=\set{\alpha_{j}\labplus{}\alpha_{j}}{x_j\not\in I}$ %,\alpha_{j_2}\oplus\alpha_{j_1}}{\alpha_{j_r}\oplus\alpha_{j_1}}$
		\item $\cont nI=\contf I,\contb I,\form n$
	\end{itemize}
	
	We define respectively $\pright$ and $\pleft$ as the proofs
%	\begin{center}
		\begin{prooftree}
			\Hypo{\beta\vdash\beta}
			\Infer{1}[$\lplus$]{\beta\vdash\boolf}
		\end{prooftree}
%		\qquad
and
		\begin{prooftree}
			\Hypo{\beta\vdash\beta}
			\Infer{1}[$\rplus$]{\beta\vdash\boolf}
		\end{prooftree}.
%	\end{center}
	%
	%
	For any $k$, we write $\pplus k$ for the proof \ 
	\begin{prooftree}
		\Hypo{\alpha_k\vdash\alpha_k}
		\Hypo{\alpha_k\vdash\alpha_k}
		\Infer{2}[$\dplus{x_k}$]{\alpha_k\labplus{}\alpha_k\vdash\alpha_k}
	\end{prooftree}
	and $\pid k$ the proof (reduced to an axiom rule)
	\begin{prooftree}
		\Hypo{\alpha_k\vdash\alpha_k}
	\end{prooftree}.

Finally, we write $R,\exs$ when a rule is applied together with a series of exchanges before and after it
which are obvious from context.
%	
%	We will also be using the notation
%	\begin{prooftree}\Theproof\pi\beta\end{prooftree}
%	for \enquote{the proof $\pi$ of conclusion $\beta$}.
\end{nota}

We can then go on with the definition of the encoding of a \bdt.% by a proof.

\begin{defi}\label{def_enc}
	Given a number of variables $n$, to any \bdt $\phi$ using the variables $\extlist{x_1}{x_n}$, 
	we associate a \lamlp proof $\repbdt{\phi}$ of conclusion $\cont n\void\vdash\boolf$ defined by
	induction on the tree structure of $\phi$. We think of any point $P$ in the tree as the root of a new \bdt, augmented with
	the information of the free variables at that point: to $P$ with free variables $I$ we associate a proof
	of conclusion $\cont nI\vdash\boolf$.
	\begin{itemize}
		\item If $P$ is a leaf $\zero/\one$ with free variables $I$, we begin by setting $I_k=I\cap\extlist{x_1}{x_k}$ and then define
		$\repbdt{P}$ as
		%(assuming $\zero$ for the picture)
	\begin{prooftree*}
		\Hypo{\nu_{n}}
		\Hypo{}
		\Hypo{\nu_2}
		\Hypo{\nu_1}
		\Theproof{\pi_{\zero/\one}}{\beta\vdash\boolf}
		\Infer{2}[$\dlinear$]{\cont 1{I_1}\vdash\boolf}
		\Infer{2}[$\dlinear$]{\cont 2{I_2}\vdash\boolf}
		\Infer[rule style=no rule]{2}{\reflectbox{$\ddots$}}
		\Infer[rule style=no rule]{1}{\cont {n-1}{I_{n-1}}\vdash\boolf}
%		\Hypo{\nu_{n-1}}
%		\Infer{2}[$\dlinear$]{\cont {n-1}{I_{n-1}}\vdash\boolf}
		\Infer{2}[$\dlinear$]{\cont nI\vdash\boolf}
	\end{prooftree*}
	(with $\nu_i=\pid i$ if $x_i\in I$ and $\nu_i=\pplus i$ otherwise)
	
	\medskip
		\item If $P=\itef {x_k}QR$, with free variables $I$, then $P$ and $Q$ both have free variables $I'=I,x_k$ and we define $\pi_P$ as
		\begin{prooftree*}
		\Theproof{\pi_Q}{\cont n{I'}\vdash\boolf}
		\Theproof{\pi_R}{\cont n{I'}\vdash\boolf}
		\Infer{2}[$\dplus{x_k},\exs$]{\cont nI\vdash\boolf}
	\end{prooftree*}
	\end{itemize}
\end{defi}

\begin{exa}
	The \bdt for the $x\text{\,\tt OR\,} (\text{\,\tt NOT\,}y)$ function we saw in \autoref{ex_or} 
\begin{center}
\begin{tikzpicture}
\node[circle,draw](x){$x$}
  child{node[circle,draw]{$y$} child{node{$\one$}} child{node{$\zero$}}} child{node{$\one$}}
  ;
\end{tikzpicture}
\end{center}
	translates to the proof:

	\noindent
	\hspace{-2.6mm}
	\scalebox{0.8}{
	\begin{prooftree}
		\Hypo{\alpha_x\vdash\alpha_x}
		\Hypo{\alpha_y\vdash\alpha_y}
		\Theproof{\pi_{\one}}{\beta\vdash\boolf}
		\Infer{2}[$\dlinear$]{\alpha_y, \linear\alpha_y \linear \beta\vdash\boolf}
		\Infer{2}[$\dlinear$]{\alpha_x,\alpha_y, \alpha_x\linear\alpha_y \linear \beta\vdash\boolf}
		\Hypo{\alpha_x\vdash\alpha_x}
		\Hypo{\alpha_y\vdash\alpha_y}
		\Theproof{\pi_{\zero}}{\beta\vdash\boolf}
		\Infer{2}[$\dlinear$]{\alpha_y, \linear\alpha_y \linear \beta\vdash\boolf}
		\Infer{2}[$\dlinear$]{\alpha_x,\alpha_y, \alpha_x\linear\alpha_y \linear \beta\vdash\boolf}
		
		\Infer{2}[$\dplus{y},\exs$]{\alpha_x,\alpha_y\oplus\alpha_y, \alpha_x\linear\alpha_y \linear \beta\vdash\boolf}
		
		\Hypo{\alpha_x\vdash\alpha_x}
		\Theproof{\pplus y}{\alpha_y\labplus{}\alpha_y\vdash\alpha_y}
		\Theproof{\pi_{\one}}{\beta\vdash\boolf}
		\Infer{2}[$\dlinear$]{\alpha_y\oplus\alpha_y, \alpha_y \linear \beta\vdash\boolf}
		\Infer{2}[$\dlinear$]{\alpha_x,\alpha_y\oplus\alpha_y, \alpha_x\linear\alpha_y \linear \beta\vdash\boolf}

		\Infer{2}[$\dplus{x},\exs$]{\alpha_x\oplus\alpha_x,\alpha_y\oplus\alpha_y, \alpha_x\linear\alpha_y \linear \beta\vdash\boolf}
	\end{prooftree}
	}

\end{exa}
We then have to check that we get indeed a faithful encoding of \bdt{}s, reducing equivalence of
\bdt to equivalence of proofs.

\begin{lemma}[representation]
Writing $\slicef B$ the \bdt slicing of $\repbdt \phi$, we have

\begin{center}
	\begin{tabular}{ll}
	$\slicef B[\beta,\lbeta]=\phi$ &
	$\slicef B[\beta,\rbeta]={\bar{\phi}}$ \\[\parskip]
	$\slicef B[\lalpha_i,\dalpha_i]\booleq{\itef {x_i}\one\zero}$ &
	$\slicef B[\ralpha_i,\dalpha_i]\booleq{\itef {x_i}\zero\one}$
\end{tabular}
\end{center}
\end{lemma}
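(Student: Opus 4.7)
The plan is to prove a stronger statement by induction on the \bdt structure, ranging over all sub-\bdts $P$ of $\phi$ together with their sets of \emph{free} variables $I$ at their position in $\phi$. Writing $\slicef B^P$ as shorthand for the \bdt slicing of $\repbdt P$ (which has conclusion $\cont n I \vdash \boolf$), I will show that $\slicef B^P[\beta,\lbeta] = P$ and $\slicef B^P[\beta,\rbeta] = \bar P$, that for $i\in I$ we have $\slicef B^P[\alpha_i,\dalpha_i] \booleq \one$, that for $i\notin I$ we have $\slicef B^P[\lalpha_i,\dalpha_i] \booleq \itef{x_i}\one\zero$ and $\slicef B^P[\ralpha_i,\dalpha_i] \booleq \itef{x_i}\zero\one$, and that all other pairs are sent to $\zero$. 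The lemma follows by specializing to the root, where $I = \void$.

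For the base case (leaf $P\in\{\zero,\one\}$), the proof $\repbdt P$ is a left-spine of $\dlinear$ rules with premises $\nu_n,\dots,\nu_1,\pi_{\zero/\one}$. Since the $\dlinear$ clause of \autoref{def_boolslicing} sends cross-premise pairs to $\zero$, the \bdt slicing decomposes into the independent contributions of each $\nu_k$ (linking its atoms to the occurrence $\dalpha_k$ of $\alpha_k$ in $\form n$) and of $\pi_{\zero/\one}$ (linking $\beta$ to $\lbeta$ or $\rbeta$ in $\boolf$). Direct computations on the building-block proofs give $\bslicing{\pid k}[\alpha_k,\alpha_k] = \one$, $\bslicing{\pplus k}[\lalpha_k,\alpha_k] = \itef{x_k}\one\zero$, $\bslicing{\pplus k}[\ralpha_k,\alpha_k] = \itef{x_k}\zero\one$, $\bslicing{\pi_\one}[\beta,\lbeta] = \one$, $\bslicing{\pi_\one}[\beta,\rbeta] = \zero$, and symmetrically for $\pi_\zero$; matching with the cases $x_k \in I$ (so $\nu_k = \pid k$) and $x_k \notin I$ (so $\nu_k = \pplus k$) yields the claim at leaves.

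For the inductive step $P = \itef{x_k}{Q}{R}$, the subproofs $\repbdt Q,\repbdt R$ have free variables $I' = I, x_k$ and are combined by $\dplus{x_k}$ modulo some exchanges (which the \bdt slicing ignores up to reindexing). Applying the $\dplus{x_k}$ clause of \autoref{def_boolslicing} together with the induction hypothesis, for pairs whose atoms both lie in the common context the clause yields $\itef{x_k}{\slicef B^Q[\cdot]}{\slicef B^R[\cdot]}$: this gives literally $\itef{x_k}{Q}{R} = P$ and $\itef{x_k}{\bar Q}{\bar R} = \bar P$ for $[\beta,\lbeta]$ and $[\beta,\rbeta]$, but a \emph{redundant} $\itef{x_k}{\itef{x_i}\one\zero}{\itef{x_i}\one\zero}$ for $[\lalpha_i,\dalpha_i]$ with $i\neq k$ and $i\notin I$, which is only $\booleq$-equivalent to $\itef{x_i}\one\zero$---explaining the asymmetric use of $=$ versus $\booleq$ in the statement. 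For $[\lalpha_k,\dalpha_k]$, the atom $\lalpha_k$ is an atom of the left component of $\alpha_k\labplus{x_k}\alpha_k$, so the clause produces $\itef{x_k}{\slicef B^Q[\alpha_k,\dalpha_k]}{\zero}$; since $k\in I'$ the induction hypothesis gives $\slicef B^Q[\alpha_k,\dalpha_k] \booleq \one$, yielding $\itef{x_k}\one\zero$ as required, and $[\ralpha_k,\dalpha_k]$ is symmetric. The main effort lies in carefully tracking the atom positions through the $\dlinear$ tower in the base case; no conceptually difficult step arises.
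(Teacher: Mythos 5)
Your proof is correct and follows essentially the same route as the paper's: a straightforward induction on the tree structure of $\phi$, with the strengthened invariant (tracking, for each free-variable set $I$, that $\slicef B[\alpha_i,\dalpha_i]\booleq\one$ when $x_i\in I$ and $\slicef B[\lalpha_i,\dalpha_i]\booleq\itef {x_i}\one\zero$, $\slicef B[\ralpha_i,\dalpha_i]\booleq\itef {x_i}\zero\one$ otherwise, all remaining pairs being $\zero$) that the paper's argument uses implicitly. If anything, yours is the more careful write-up: it makes the induction hypothesis explicit, justifies the $=$ versus $\booleq$ distinction, and states the free/non-free dichotomy at the leaves consistently with \autoref{def_enc}, whereas the paper's prose appears to have those two cases swapped.
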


\begin{proof}
	A straightforward induction: at the leaf level (first part of the definition) we get, if we are dealing for instance with a $\one$ leaf,
%	(otherwise just
%	swap $\zero/\one$)
	$\slicef B[\beta,\lbeta]=\one$, $\slicef B[\beta,\rbeta]=\zero$, if ${x_i}$ is not free then 
	$\slicef B[\alpha_i,\dalpha_i]=\one$, if on the contrary ${x_i}$ is free then $\slicef B[\lalpha_i,\dalpha_i]={\itef {{x_i}}\one\zero}$ and
	$\slicef B[\ralpha_i,\dalpha_i]={\itef {{x_i}}\zero\one}$.
	
	From there, we can see that the induction step for variable ${x_i}$ of the definition just branches together in the expected way the $\slicef B[\beta,\lbeta]$ and
	$\slicef B[\beta,\rbeta]$, does not change (up to equivalence) any of the $\slicef B[\,\cdot\,,\dalpha_j]$ for $j\neq i$ and turns
	$\slicef B[\alpha_i,\dalpha_i]\sim\one$ into $\slicef B[\lalpha_i,\dalpha_i]\sim{\itef {x_i}\one\zero}$ and
	$\slicef B[\ralpha_i,\dalpha_i]\sim{\itef {x_i}\zero\one}$.
\end{proof}

\begin{cor}
	Two \bdt $\phi,\psi$ are equivalent \iff $\repbdt \phi\malleq \repbdt \psi$.
\end{cor}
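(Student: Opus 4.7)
The plan is to apply the proof equivalence theorem (characterizing $\malleq$ via \bdt slicing) to reduce the statement to the equivalence of \bdt slicings of $\repbdt\phi$ and $\repbdt\psi$, and then read off the result from the representation lemma.

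First, observe that the representation lemma only records four families of pairs. To justify the reduction, I would begin by noting that any other pair $[\alpha,\gamma]$ of distinct atom occurrences in the conclusion sequent $\cont n\void\vdash\boolf$ must satisfy $\bslicing{\repbdt\phi}[\alpha,\gamma]=\zero$ (and likewise for $\repbdt\psi$). Indeed, inspecting \autoref{def_enc}, the only axiom rules appearing in $\repbdt\phi$ are either inside the subproofs $\pi_{\zero/\one}$ (linking copies of $\beta$), inside the subproofs $\pplus i,\pid i$ (linking copies of $\alpha_i$), or at leaves of $\repbdt\phi$ (linking $\alpha_i$ to $\dalpha_i$). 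A short inspection of the $\dplus{}$ and $\dlinear$ cases of \autoref{def_boolslicing} shows that no atom in $\contf I$, $\contb I$ can be paired with an atom of a different index, nor can $\beta$'s of $\boolf$ and $\form n$ be paired with anything outside their respective intended partners, giving a \bdt that reduces to $\zero$.

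With this in hand, the \bdt slicing of $\repbdt\phi$ is determined (up to $\booleq$) by the four pairs of the representation lemma, and the same holds for $\repbdt\psi$. Therefore $\bslicing{\repbdt\phi}\sliceeq\bslicing{\repbdt\psi}$ iff the following four equivalences hold simultaneously: $\phi\booleq\psi$, $\bar\phi\booleq\bar\psi$, $\itef{x_i}\one\zero\booleq\itef{x_i}\one\zero$, and $\itef{x_i}\zero\one\booleq\itef{x_i}\zero\one$ for each $i$. The last two are tautological; and since negation of \bdt clearly preserves equivalence (a \bdt $\phi$ represents a Boolean function $f$ iff $\bar\phi$ represents $\neg f$), the second condition is equivalent to the first.

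Hence $\bslicing{\repbdt\phi}\sliceeq\bslicing{\repbdt\psi}$ iff $\phi\booleq\psi$. Combining with the proof equivalence theorem, which gives $\repbdt\phi\malleq\repbdt\psi$ iff $\bslicing{\repbdt\phi}\sliceeq\bslicing{\repbdt\psi}$, we obtain the desired equivalence. The main (minor) obstacle is the bookkeeping in the first step: verifying that all \enquote{irrelevant} pairs of atoms in the conclusion of $\repbdt\phi$ really collapse to $\zero$, rather than to some non-trivial \bdt depending on $\phi$; the case analysis is routine but must be carried out to ensure that the four pairs of the representation lemma are the only ones contributing information.
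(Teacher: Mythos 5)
Your proof is correct and follows exactly the route the paper intends (the corollary is left without explicit proof there): combine the proof-equivalence theorem with the representation lemma, reading off $\phi\booleq\psi$ from the pair $[\beta,\lbeta]$. Your extra bookkeeping step --- checking that all pairs not listed in the representation lemma yield \bdt{}s equivalent to $\zero$ in both encodings, so that the converse direction goes through --- is a genuine and correctly identified obligation that the paper leaves implicit.
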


We can then have a look at the complexity of the reduction:

\begin{lemma}[\Logspace reduction]\label{lem_rep}
	The representation $\repbdt\phi$ of a \bdt $\phi$ by a proof can be computed in logarithmic space.
\end{lemma}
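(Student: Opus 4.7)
The plan is to design a logspace transducer that, reading $\phi$ from the input tape, prints a sequential encoding of $\repbdt\phi$ as a tree of rule-labelled nodes (in the sense of \autoref{rem_proofs}) by a depth-first traversal, while using only a logarithmic amount of work space.

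The first step is to exploit the very rigid structure of $\repbdt\phi$: each internal vertex $v = \itef{x_k}{Q}{R}$ of $\phi$ contributes one $\dplus{x_k}$ rule together with an $O(n)$-long block of exchange rules, and each leaf contributes a chain of $n$ rules of shape $\dlinear$ with a side subproof $\pid i$ or $\pplus i$ at step $i$, terminated by $\pi_\zero$ or $\pi_\one$. A position in the output tree can thus be encoded by a pair consisting of a pointer to a vertex $v$ of $\phi$ and a local index $j$ of value at most $O(n)$ that selects the specific node within the block attached to $v$; this fits in $O(\log(|\phi|+n))$ bits.

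Next I would show that, from such an address, the corresponding rule name and sequent can be printed in logspace. The only nontrivial datum is the set $I$ of free variables at $v$, which by definition is the set of labels occurring along the path from the root of $\phi$ to $v$; membership $x_i \in I$ is tested by walking that path with two logsize pointers. Equipped with this oracle, the context of the sequent $\cont{n}{I} \vdash \boolf$ is printed atom by atom by iterating $i$ from $1$ to $n$ and emitting $\alpha_i$ or $\alpha_i \oplus \alpha_i$ according to whether $x_i \in I$, followed by $\form n \vdash \boolf$. The sequent variants occurring in the middle of a leaf chain (one $\linear$ has been peeled off the right, one atom has been consumed) or inside an exchange block (two adjacent positions swapped) differ from the canonical form in a way determined by the local index alone, and are therefore also logspace-printable. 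The DFS itself is standard once every node carries such an address: from the current position one decides whether to descend to a premise (and which one) or climb back to a parent and move on, within the same logarithmic budget.

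The main obstacle I anticipate is making the exchange sub-block fully explicit: after branching $\pi_Q$ and $\pi_R$, which are given over the context associated with $I' = I, x_k$, the formula $\alpha_k \oplus \alpha_k$ produced by $\dplus{x_k}$ must be shuffled across its $O(n)$ neighbours to its canonical position inside $\contb I$, and similarly at each step of a leaf chain the relevant atom (or atom bundle) must be brought next to $\form n$ before the $\dlinear$ fires. I would fix a deterministic, logspace-computable schedule of $\ex{i}{j}$ rules achieving these reorderings, and verify that every intermediate sequent is printable from $(v, j)$ alone. This is tedious bookkeeping rather than a conceptual difficulty, and it does not increase the asymptotic space usage.
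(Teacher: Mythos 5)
Your proposal is correct and takes essentially the same route as the paper's proof: the paper likewise computes the free-variable set at a vertex by walking the path to the root with two pointers, and then performs a local, vertex-by-vertex replacement by the corresponding gadgets of \autoref{def_enc}, composing two logspace functions. You merely inline these two stages into a single transducer and spell out the addressing, DFS and exchange-rule bookkeeping that the paper leaves implicit.
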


\begin{proof}
	We can do this in two steps: first process the \bdt to have the free variables at each vertex made explicit,
	which is done by listing the variables encountered between the vertex and the root of the tree and requires only
	to remember two positions in the tree; then each vertex of the new tree can be replaced by the corresponding piece of
	proof from \autoref{def_enc}. Both these steps can be performed in logarithmic space, and since logarithmic
	space function compose, we are done. 
\end{proof}

Some extra remarks about this encoding: first, because we need to keep track of which variables have been used
and which have not, we do not have a linear size bound but only a quadratic one. Second, on a more positive note:
we can see how closely the tree structure of $\pi_\phi$ mimics that of $\phi$; in fact we believe that
a form of one to one correspondance between proofs of $\cont nI\vdash\boolf$ and \bdt using variables
$\extlist{x_1}{x_n}$ could be worked out.

%\section{\mallm proof equivalence is \Logspace-complete} \label{sec_complete}

\section*{Conclusion} % 1p
We established that the equivalence problem of intuitionistic additive-multiplicative (without units ) linear logic
is \Logspace-complete, this was achieved by introducing an intermediate representation of proofs based on binary
decision trees.

%From this we were able to deduce that the proof equivalence problem of this logic lies in \Logspace.
We also established low-complexity (computable in logarithmic space) 
correspondence between binary decision trees and proofs. This correspondance relates the tree structures
of proofs and \bdt very tightly amd ought to be studied further in view of potential limitation results.

%By noting that \bdt equivalence is \Logspace-complete, we also determined the complexity of equivalence of
%\bdt slicing (that is, including those which do not necessarily come from a proof) as \Logspace-complete.

However, the question of the possibility of a notion of proofnets for this logic is still unsettled:
even if the equivalence problem is in \Logspace, it might very well be that no notion of canonical representative
could be built.
The fact that \emph{optimization} of \bdt is a hard problem \cite{hr76} (even hard to approximate \cite{sieling08}),
%combined with the idea that canonical, low-complexity representatives could be seen as
could possibly be used to derive limitations on (if not impossibility of) the existence of low-complexity,
canonical representatives. But this has still to be clarified.

\section*{Acknowlegments} % 1p

The author would like to thank the anonymous referees the their helpful and constructive comments,
and Willem Heijltjes for pointing to the idea of using the exchange rule to fix the \Logspace-hardness proof.

%\undef{\enquote}
\bibliographystyle{plain}
\bibliography{biblio/biblio_AH,biblio/biblio_IZ} % 0.5

\end{document}